\documentclass[journal]{IEEEtran}
\usepackage{amsthm}
\usepackage{amssymb}
\usepackage{graphicx}
\usepackage{newtxmath}
\usepackage[cal=cm]{mathalpha} 
\usepackage{bm}
\newtheorem{theorem}{Theorem}{}
{}
{}
{}
{}
\newtheorem*{remark}{Remark}
\usepackage{algorithm}
\usepackage{algpseudocode}
\usepackage{graphics}
\usepackage{multirow}
\usepackage{array}
\usepackage{cite}
\usepackage{stfloats}
\usepackage{booktabs}
\usepackage{color}
\usepackage{tabu} 
\usepackage{stfloats}
\usepackage{stmaryrd} %CP Decomposition, \llbracket \rrbracket 
\usepackage[colorlinks,linkcolor=blue,anchorcolor=blue,citecolor=blue,bookmarks=true]{hyperref}
\usepackage{breqn}
\usepackage{placeins}
\usepackage{subcaption}
\usepackage{xcolor}

\makeatletter
\let\sum\relax
\let\prod\relax
\DeclareSymbolFont{largesymbols}{OMX}{cmex}{m}{n}
\DeclareMathSymbol{\sum}{\mathop}{largesymbols}{"50}
\DeclareMathSymbol{\prod}{\mathop}{largesymbols}{"51}
\usepackage[capitalize]{cleveref} % 导言区
\crefname{figure}{Fig.}{Figs.}   % 小写引用：\cref → "Fig. 1", "Figs. 1 and 2"
\Crefname{figure}{Fig.}{Figs.}   % 大写引用：\Cref → "Fig. 1"（句首时）

\usepackage{caption}
\begin{document}
	\title{Self-Calibration DOA Estimation for Moving Dilated Arrays with Gain and Phase Errors}
	% author names and affiliations
	% use a multiple column layout for up to three different
	% affiliations

	\author{\IEEEauthorblockN{
			Kaiyuan Fang,
			Chengzhi Ye, \IEEEmembership{Student Member,~IEEE}, 
			Ruoyu Zhang, \IEEEmembership{Senior Member,~IEEE}, 
			Shichao Wei, 
			\\ Wen Wu, 
			\IEEEmembership{Senior Member,~IEEE}
		}
		\thanks{
		Kaiyuan Fang, Chengzhi Ye, Ruoyu Zhang, Shichao Wei, and Wen Wu are with the Key Laboratory of Near-Range RF Sensing ICs and Microsystems, Ministry of Education, School of Electronic and Optical Engineering, Nanjing University of Science and Technology, Nanjing 210094, China.
		Chengzhi Ye and Shichao Wei are also with Qian Xuesen College, Nanjing University of Science and Technology, Nanjing 210094, China (e-mail: kyfang@njust.edu.cn; ryzhang19@njust.edu.cn; qxsycz8166@njust.edu.cn; weishichao@njust.edu.cn; wuwen@njust.edu.cn).
		 \textit{(Corresponding author: Ruoyu Zhang.)}}
		%\thanks
		%\thanks
		\vspace{-1.55em}
	}

	% use for special paper notices
	%\IEEEspecialpapernotice{(Invited Paper)}

	% make the title area
	\maketitle

	% As a general rule, do not put math, special symbols or citations
	% in the abstract
	%\begin{abstract}
	%f
	%\end{abstract}

	%% no keywords
	%\begin{IEEEkeywords}
	%Sparse vector coding, 
	%\end{IEEEkeywords}

	% For peer review papers, you can put extra information on the cover
	% page as needed:
	% \ifCLASSOPTIONpeerreview
	% \begin{center} \bfseries EDICS Category: 3-BBND \end{center}
	% \fi
	%
	% For peerreview papers, this IEEEtran command inserts a page break and
	% creates the second title.
	\IEEEpeerreviewmaketitle

	\begin{abstract}
	Moving dilated arrays (MDAs) can increase the uniform degrees of freedom (uDOFs) while alleviating mutual coupling, but their performance is susceptible to unknown gain and phase errors (GPEs). To deal with this problem, we investigate self-calibration direction-of-arrival (DOA) estimation for MDAs with unknown GPEs. First, multi-position observations are phase aligned to construct an equivalent virtual array with an enlarged aperture. Next, the motion-induced relationships among the GPEs are exploited to establish explicit calibration constraints. By leveraging the orthogonality between the steering vectors and the noise subspace, the joint estimation of the DOAs and GPEs is formulated as a constrained optimization problem. To solve this problem, a two-stage alternating optimization (AO) algorithm is developed. In the first stage, the GPEs are estimated in closed form with the DOAs fixed. In the second stage, the estimated GPEs are used to calibrate the covariance matrix, after which the DOAs are refined using the multiple signal classification (MUSIC) algorithm. To obtain reliable initial estimates, a rotational invariance-like (RIL) initialization method is further developed by exploiting two translated virtual subarrays with identical GPEs. Furthermore, the Cramér–Rao bound (CRB) is derived to characterize the theoretical performance limit of the considered system. Simulation results demonstrate that the proposed method converges reliably and achieves accurate DOA estimation under different array apertures, numbers of snapshots, GPE levels, and numbers of sources.
\end{abstract}

\begin{IEEEkeywords}
	DOA estimation, moving dilated array, array self-calibration, gain and phase errors, alternating optimization.
\end{IEEEkeywords}

	\section{Introduction}

Direction-of-arrival (DOA) estimation is an important technique in array signal processing and has been widely applied in industrial Internet of Things (IoT), intelligent transportation, autonomous driving and smart manufacturing \cite{526899,10522654,10950424,ZhangRYDOA,JiangMAAssisted,YeRotable,ZhangRYISAC}. Existing DOA estimation methods typically rely on large-scale antenna arrays or sparse arrays to form beams with high angular resolution \cite{HuMIMODeepLearn,YTMIMOGPE,ZhangRYMIMO}. However, these methods generally assume fixed uniform linear arrays with constrained apertures. Therefore, achieving accurate and high-resolution DOA estimation with limited array resources remains an important research problem in array signal processing \cite{10529957}.

Over the past decades, a variety of DOA estimation methods have been developed, including conventional beamforming, maximum-likelihood estimation, and subspace-based methods represented by the multiple signal classification (MUSIC) and the estimation of signal parameters via rotational invariance techniques (ESPRIT) \cite{10144718}. Conventional beamforming methods are simple to implement but generally provide limited angular resolution\cite{665}, whereas maximum-likelihood estimators can achieve high estimation accuracy at the cost of computationally intensive nonconvex optimization\cite{8809740}. In contrast, MUSIC and ESPRIT provide high-resolution estimation with moderate computational complexity and have therefore been widely adopted\cite{8212844}. Existing DOA estimation methods generally employ large scale or sparse antenna arrays to synthesize beams with high angular resolution \cite{5456168}. However, a potential limitation of these methods is their reliance on fixed arrays, whose geometries and apertures cannot be flexibly adjusted after deployment. This lack of spatial flexibility restricts the achievable angular resolution and limits their adaptability to varying sensing requirements. Therefore, overcoming the inherent constraints of fixed arrays is essential for further improving DOA estimation performance.

Recently, moving dilated arrays (MDAs) have attracted research interest. The basic idea of MDAs is to coherently combine measurements collected at different time instants or spatial positions into an equivalent array with an enlarged virtual aperture, thereby improving the spatial resolution and increasing the number of resolvable sources. Recent studies related to MDAs  have mainly sought to increase the uniform degrees of freedom (uDOFs) through array dilation and motion-assisted synthesis, while exploiting the enlarged physical inter-element spacings to alleviate mutual coupling \cite{LishuangDOF,QinGDDOF,ChenNA,YeGlobalCom,LishuangDAfamily,Lishuangq-dilated,9773022}.  
Several array configurations have been developed to increase the uDOFs while alleviating mutual coupling \cite{QinGDDOF,ChenNA,ZhouMUDNA}. In \cite{QinGDDOF}, a dilated nested array was combined with short-distance motion to construct a hole-free difference coarray. By employing multi-subarray dilation and element relocation, Chen \emph{et al.} \cite{ChenNA} improved the balance between uDOFs and mutual coupling. More recently, the multi-subarray dilated nested array developed by L. Zhou and G. Zhou \cite{ZhouMUDNA} redistributes redundant sensors in densely spaced subarrays to larger inter-element spacings, thereby improving DOA estimation performance, particularly under strong mutual coupling.
Beyond nested-array configurations, the extension of array dilation from nested arrays to arbitrary linear arrays was investigated in \cite{LishuangDAfamily}. The resulting arrays increase the uDOFs while reducing mutual coupling, although their uDOF gain becomes limited for dilation factors larger than three. This limitation was subsequently addressed in \cite{Lishuangq-dilated} by introducing a multi-shift synthesis strategy, which enables the achievable uDOFs to scale with  the dilation factor $q$ for general dilation factors.

The aforementioned studies mainly improve the conventional difference coarray through array-geometry design and motion-pattern optimization. From a different coarray-modeling perspective, Guo \emph{et al.} \cite{GuoHDHighOrder} extended the dilated-array framework to high-order virtual coarray models and introduced a sparse shift sequence to reduce the redundancy of moving-array synthesis and further increase the uDOFs.
 The MDA methods reviewed above generally assume accurately calibrated sensor gains and phases. In practice, unknown gain and phase errors (GPEs) caused by hardware imperfections and calibration errors distort the array steering vectors, leading to virtual array manifold mismatch and degraded DOA estimation accuracy. Such performance degradation can be particularly pronounced in MDA systems. 

Various methods have been developed to address array GPEs, including joint parameter estimation, exploitation of partial calibration information and higher-order-statistics- and optimization-based techniques \cite{WFAO,LiuAFWF,LiaoB1,WenGPEMIMO,YeRobust,YeSelfCalibtate,YetCalibrateNA,YeMAGPE,Convex}. These methods mitigate the effect of array manifold mismatch on DOA estimation by estimating the error parameters, constructing error-insensitive quantities, or exploiting partial calibration information.
 For partly calibrated uniform linear arrays (ULAs), Liao and Chan\cite{LiaoB1} extended ESPRIT and obtained closed-form estimates of the DOAs and unknown GPEs. Normalized rotational invariance was exploited by Wen \emph{et al.}    \cite{WenGPEMIMO} to jointly estimate the directions of departure, DOAs, and GPEs using only one well-calibrated element in each of the transmit and receive arrays. Using the rotational invariance of an optimized moving array, Ye \emph{et al.} \cite{YeRobust} estimated partial GPEs and subsequently performed DOA estimation using the calibrated covariance matrix.
 Higher-order statistics and optimization techniques have also been employed to handle GPEs. Selected entries of the fourth-order cumulant matrix were used in \cite{YetCalibrateNA} to recursively estimate the GPEs. For noncoherent subarrays, Tirer and Bialer \cite{Convex} formulated the estimation and correction of inter-subarray phase offsets as a convex optimization problem. A distributionally robust optimization framework was adopted in \cite{LiGPEConvex} to model unknown GPEs, under which the resulting robust DOA estimation problem was reformulated as a tractable convex program. 
More recently, a data-driven method was developed by Labbaf \emph{et al.} \cite{DNN} to learn the mapping from covariance matrices of a uniform circular array under GPEs, mutual coupling, and other array imperfections to the DOAs, achieving accurate and robust DOA estimation.
However, most existing methods focus on fixed arrays, whose achievable aperture and spatial DOFs are constrained by their physical configurations and are difficult to further extend through array motion. Hence, DOA estimation methods specifically designed for MDA systems with unknown GPEs remain necessary.

In this paper, we investigate the self-calibration DOA estimation problem for MDA systems with unknown GPEs. The main contributions of this paper are summarized as follows:\
\begin{itemize}

	\item We propose an MDA signal model in the presence of unknown GPEs. Measurements collected at different positions are aligned through phase compensation to synthesize an equivalent virtual array with an enlarged aperture. By further exploiting the pairwise GPE equality relationships induced by array motion, we formulate the associated constraints and the joint DOA and GPE estimation problem.

	\item We develop a two-stage self-calibration alternating optimization (AO) algorithm. In the first stage, with the DOA estimates fixed, GPE estimation is formulated as a constrained least-squares problem, for which a closed-form solution is obtained using the Lagrange multiplier method. In the second stage, the covariance matrix is calibrated using the estimated GPEs, followed by DOA updating via MUSIC. These two stages are performed alternately to progressively improve both GPE calibration and DOA estimation. Moreover, a rotational invariance-like (RIL) initialization is further designed to provide reliable initial DOA estimates.

	\item We analyze the theoretical performance and computational complexity of the proposed method. Specifically, the Fisher information matrix and the corresponding Cramér--Rao bound (CRB) for the joint estimation of DOAs and GPEs are derived, providing a theoretical benchmark for DOA estimation accuracy in the presence of unknown GPEs. In addition, the computational complexities of the main algorithmic steps and the overall iterative procedure are analyzed to evaluate the computational cost and practical feasibility of the proposed method.

	\item Extensive simulations verify the effectiveness of the proposed method. The results demonstrate that the proposed algorithm converges reliably in the presence of unknown GPEs and achieves DOA estimation accuracy close to the Cramér--Rao bound. They also reveal the relationship between GPE calibration accuracy and DOA estimation performance and confirm the robustness of the proposed method under different array apertures, numbers of snapshots, error levels, and numbers of sources.

\end{itemize}

\noindent\textit{Notations:}
Scalars, vectors, matrices, and sets are denoted by italic, boldface lowercase, boldface uppercase, and calligraphic uppercase letters, respectively. 
$\mathbb{R}$ and $\mathbb{C}$ denote the real and complex fields, respectively, and $j=\sqrt{-1}$ is the imaginary unit. 
$(\cdot)^T$, $(\cdot)^*$, $(\cdot)^H$, and $(\cdot)^{-1}$ represent the transpose, conjugate, conjugate transpose, and inverse, respectively. 
$[\bm{A}]_{p,q}$ denotes the $(p,q)$-th entry of $\bm{A}$, while $\bm{A}(\mathcal{P},:)$ denotes the submatrix formed by the rows indexed by $\mathcal{P}$.  
 $\bm{0}_{M\times N}$ and $\bm{1}_{M\times N}$ denote the
$M\times N$ zero matrix and all-ones matrix, respectively.
$\bm{I}_N$ denotes the $N\times N$ identity matrix. The vectors
$\bm{i}_k\in\mathbb{R}^{K\times 1}$ and
$\bm{q}_m\in\mathbb{R}^{M\times 1}$ denote the $k$-th and $m$-th
standard basis vectors, respectively.
$\|\cdot\|$, $\mathbb{E}\{\cdot\}$, $\operatorname{Cov}(\cdot)$, $\operatorname{diag}(\cdot)$, and $\operatorname{tr}(\cdot)$ denote the Euclidean norm, expectation, covariance, diagonalization operator, and trace, respectively. 
$\mathfrak{R}\{\cdot\}$ extracts the real part, while $\circledast$ denotes the Hadamard product.

\section{System Model }

In this section, we propose a DOA estimation model for an MDA system with GPEs. As illustrated in Fig.~\ref{model}, a ULA mounted on a moving platform collects observations at multiple spatial positions. These observations are carrier phase synchronized and spatially rearranged to synthesize an equivalent virtual array with an enlarged aperture.
We assume that the source complex envelopes, DOAs, and
propagation environment remain approximately unchanged over
one virtual array synthesis interval. 
 The first antenna is assumed to be calibrated and its GPE coefficient remains unity throughout the array movement.
 Based on the resulting virtual array model, we then derive its received signal and covariance models. Finally,  we formulate DOA estimation in the presence of unknown GPEs as a joint optimization problem over the DOAs and GPEs.

\begin{figure}[h]
	\centering
	\includegraphics[width=0.9\linewidth]{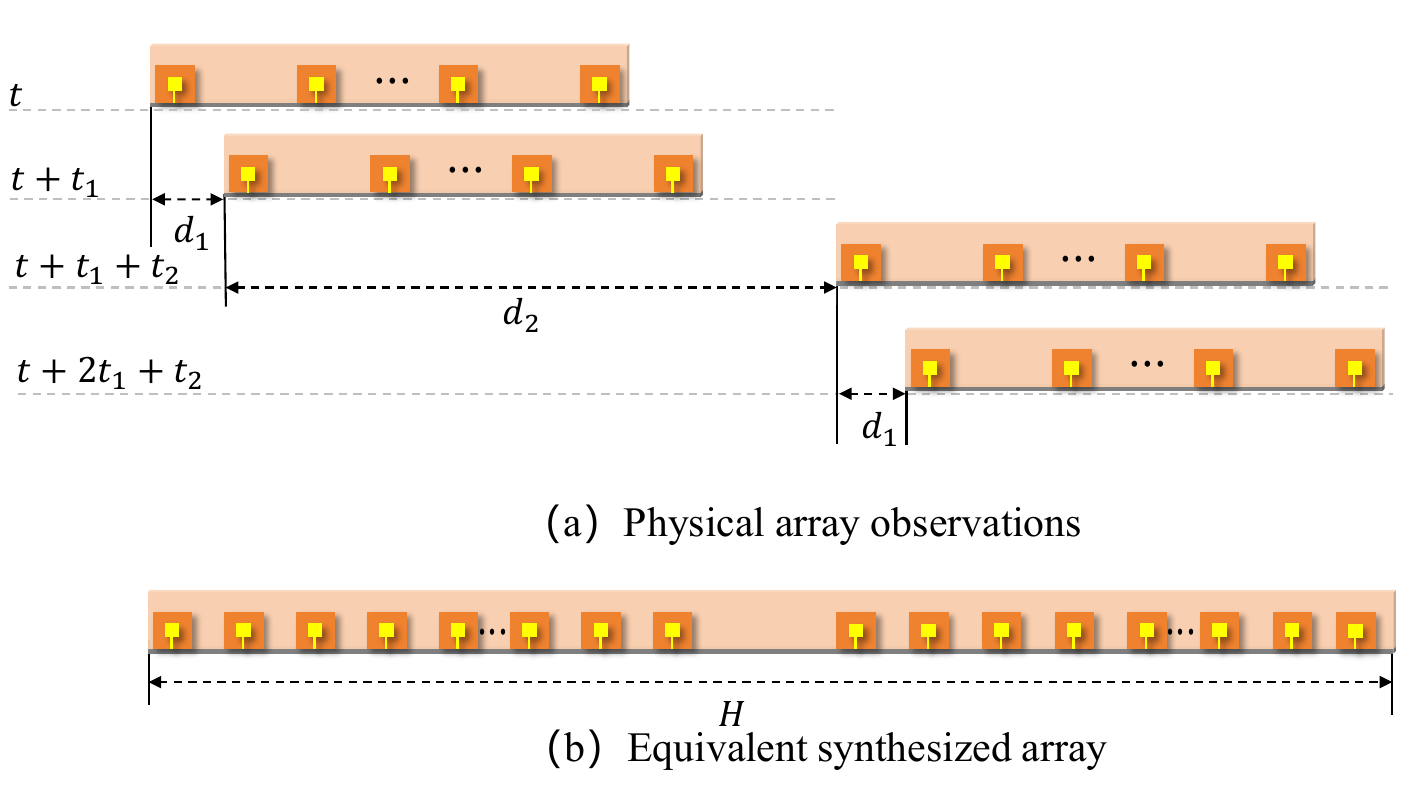}
	\caption{Equivalent synthesized array model based on multi-position observations.}
	\label{model}
\end{figure}
At time $t$, we consider a ULA mounted on a platform that moves along the array axis at a constant velocity $v$. The array consists of $M_0$ antenna elements with an inter-element spacing of $d=\lambda$ and receives signals from $K$ mutually uncorrelated narrowband far-field sources. At the reference time $t$, let $s_k(t)$ denote the signal emitted by the $k$-th source. The remaining receive channels are subject to unknown GPEs. 
The received signal vector is
\begin{align}
	\begin{aligned}
		\bm{y}(t) &= \sum_{k=1}^{K}  \bm{\varGamma}_1 s_k(t) e^{j2\pi vt\frac{\cos(\theta_k)}{\lambda}} \bm{a}_p(\theta_k) + \bm{n}(t) \\
		&= \bm{\varGamma}_1 \bm{A}_{p0} \bm{s}(t) + \bm{n}(t),
	\end{aligned}
\end{align}
where  $\lambda$ denotes the carrier wavelength, $\theta_k$ is the DOA of the $k$-th source, $\bm{a}_p(\theta_k)\in\mathbb{C}^{M_0\times1}$ is the steering
vector associated with $\theta_k$, given by
\begin{align}
	\bm{a}_p(\theta_k)=\big[1,e^{j2\pi\frac{d\cos(\theta_k)}{\lambda}},e^{j2\pi\frac{2d\cos(\theta_k)}{\lambda}},\ldots,e^{j2\pi\frac{(M_0-1)d\cos(\theta_k)}{\lambda}}\big]^T, \nonumber
\end{align}
$\bm{n}(t)\in\mathbb{C}^{M_0\times1}$ is the additive white Gaussian noise (AWGN) vector with $\bm{n}(t)\sim\mathcal{CN}\left(\bm{0}_{M_0\times1},\sigma_n^2\bm{I}_{M_0}\right)$ and $\sigma_n^2$ denotes the noise variance.
 Moreover, $\bm{\varGamma}_1=\operatorname{diag}(\rho_1e^{j\varphi_1},\rho_2e^{j\varphi_2},
\ldots,\rho_{M_0}e^{j\varphi_{M_0}})\in\mathbb{C}^{M_0\times M_0}$ denotes the GPE matrix, $\rho_m$ and $\varphi_m$ denote the gain coefficient and phase error of the $m$-th physical antenna element, respectively. $\bm{A}_{p0}=[\bm{a}_p(\theta_1),\bm{a}_p(\theta_2),\ldots,\bm{a}_p(\theta_K)]\in\mathbb{C}^{M_0\times K}$ is the steering matrix of the physical array, and
$\bm{s}(t)=[s_1(t)e^{j2\pi vt\frac{\cos(\theta_1)}{\lambda}},s_2(t)e^{j2\pi vt\frac{\cos(\theta_2)}{\lambda}},\ldots,
s_K(t)e^{j2\pi vt\frac{\cos(\theta_K)}{\lambda}}]^T\in\mathbb{C}^{K\times1}$
is the source signal vector.

At time $t+t_1$, the platform moves to the right by
$d_1=vt_1=\lambda/2$. Since this displacement is small, the GPEs are assumed to remain unchanged. The received signal vector becomes
\begin{align}
	\bm{y}(t+t_1) &= \sum_{k=1}^{K} \bm{\varGamma}_1 s_k(t+t_1)e^{j2\pi vt\frac{\cos(\theta_k)}{\lambda}} \nonumber\\
	&\quad\times e^{j2\pi vt_1\frac{\cos(\theta_k)}{\lambda}}\bm{a}_p(\theta_k)+\bm{n}(t+t_1) \nonumber\\
	&=\bm{\varGamma}_1\bm{A}_{p1}\bm{s}(t+t_1)+\bm{n}(t+t_1),
	\label{y_t+t1}
\end{align}
where
$\bm{A}_{p1}=[\bm{a}_p(\theta_1)e^{j2\pi\frac{d_1\cos(\theta_1)}{\lambda}},
\bm{a}_p(\theta_2)e^{j2\pi\frac{d_1\cos(\theta_2)}{\lambda}},
\ldots,\\
\bm{a}_p(\theta_K)e^{j2\pi\frac{d_1\cos(\theta_K)}{\lambda}}]\in\mathbb{C}^{M_0\times K}$,
and
$\bm{s}(t+t_1)=[s_1(t+t_1)e^{j2\pi vt\frac{\cos(\theta_1)}{\lambda}},
s_2(t+t_1)e^{j2\pi vt\frac{\cos(\theta_2)}{\lambda}},
\ldots,
s_K(t+t_1)e^{j2\pi vt\frac{\cos(\theta_K)}{\lambda}}]^T\in\mathbb{C}^{K\times1}$.
The noise vectors $\bm{n}(t)$ and $\bm{n}(t+t_1)$ are assumed to be uncorrelated.

Assume that the complex envelopes of the narrowband sources remain approximately unchanged during the entire aperture-synthesis interval and that the carrier frequency and movement duration are accurately known. The temporal variation of the source signal vector is therefore dominated by the deterministic carrier-phase progression, i.e.,
\begin{align}	
	\bm{s}(t+\tau) \approx e^{j 2\pi f \tau} \bm{s}(t), \tau \in \{t_1, t_1+t_2, 2t_1+t_2\},
\end{align}
where $f$ denotes the carrier frequency, and $\tau$ is the elapsed time from the reference instant $t$. 
 Accordingly, multiplying $\bm{y}(t+t_1)$ by $e^{-j2\pi f t_1}$ yields the phase-synchronized received signal vector\cite{QinGDDOF} 
\begin{align}
	\tilde{\bm{y}}_1(t+t_1)
	&=\bm{y}(t+t_1)e^{-j2\pi ft_1} \nonumber\\
	&=\bm{\varGamma}_1\bm{A}_{p1}\bm{s}(t)
	+\tilde{\bm{n}}(t+t_1),
\end{align}
where
$
\tilde{\bm{n}}(t+t_1)
=
e^{-j2\pi ft_1}
\bm{n}(t+t_1).
$

Following the same carrier-phase compensation procedure, the phase-synchronized observations collected after the subsequent two movements are given as follows.

At time $t+t_1+t_2$, the platform moves further to the right by
$d_2=vt_2$. Since this displacement is relatively large, the GPE
matrix changes to $\bm{\varGamma}_2$. After the carrier-phase compensation procedure,
the phase-synchronized
received signal vector is
\begin{align}
	\tilde{\bm{y}}_2(t+t_1+t_2)
	=\bm{\varGamma}_2
	\bm{A}_{p2}
	\bm{s}(t)
	+\tilde{\bm{n}}(t+t_1+t_2),
	\label{y2_phase_aligned}
\end{align}
where $
\bm{A}_{p2}
=
\big[
\bm{a}_p(\theta_1)
e^{j2\pi\frac{(d_1+d_2)\cos(\theta_1)}{\lambda}},
\bm{a}_p(\theta_2)
e^{j2\pi\frac{(d_1+d_2)\cos(\theta_2)}{\lambda}},
\\\ldots,
\bm{a}_p(\theta_K)
e^{j2\pi\frac{(d_1+d_2)\cos(\theta_K)}{\lambda}}
\big]
\in\mathbb{C}^{M_0\times K}
$,
$
\tilde{\bm{n}}(t+t_1+t_2)
=
e^{-j2\pi f(t_1+t_2)}
\bm{n}(t+t_1+t_2).
$

At time $t+2t_1+t_2$, the platform moves by another short
distance $d_1=vt_1$. Therefore, the GPE matrix remains
unchanged. 
After the carrier-phase compensation procedure, the phase-synchronized
received signal vector is
\begin{align}
	\tilde{\bm{y}}_3(t+2t_1+t_2)
	&=
	e^{-j2\pi f(2t_1+t_2)}
	\bm{y}(t+2t_1+t_2)
	\nonumber\\
	&=
	\bm{\varGamma}_2
	\bm{A}_{p3}
	\bm{s}(t)
	+\tilde{\bm{n}}(t+2t_1+t_2),
	\label{y3_phase_aligned}
\end{align}
where $
\bm{A}_{p3}
=
\big[
\bm{a}_p(\theta_1)
e^{j2\pi\frac{(2d_1+d_2)\cos(\theta_1)}{\lambda}},
\bm{a}_p(\theta_2)
e^{j2\pi\frac{(2d_1+d_2)\cos(\theta_2)}{\lambda}},
\\\ldots,
\bm{a}_p(\theta_K)
e^{j2\pi\frac{(2d_1+d_2)\cos(\theta_K)}{\lambda}}
\big]
\in\mathbb{C}^{M_0\times K}
$,
$
\tilde{\bm{n}}(t+2t_1+t_2)
=
e^{-j2\pi f(2t_1+t_2)}
\bm{n}(t+2t_1+t_2).
$

After the carrier-phase synchronization, the  observations share the same source-signal vector $\bm{s}(t)$, while retaining the displacement-induced spatial phase shifts in their respective position-dependent steering matrices. As illustrated in Fig.~\ref{model}, these observations are subsequently stacked and rearranged according to the spatial locations of the virtual sensors, thereby synthesizing an equivalent virtual array with an enlarged aperture.

Consequently, the equivalent virtual array consists of
$M=4M_0$ virtual elements located at

\begin{align}
	\mathcal{P}
	=&\left\{(n-1)d_1,\ 1\le n\le 2M_0\right\}\nonumber\\
	&\cup\left\{
	H-(M-m)d_1,\ 2M_0+1\le m\le M
	\right\},
\end{align}
where
\begin{align}
	H
	=v(2t_1+t_2)+(M_0-1)d
	=2d_1+d_2+(M_0-1)d \nonumber
\end{align}
denotes the total array length.

Owing to the array movement, the odd- and even-indexed entries
of the GPE vector $\bm{\gamma}=[\gamma_1,\ldots,\gamma_M]^T$ satisfy
\begin{align}
	\gamma_m=\gamma_{m+1},\quad
	m=1,3,\ldots,\frac{M}{2}-1,
	\frac{M}{2}+1,\ldots,M-1.
	\label{even=odd}
\end{align}
Since the first of the $M_0$ physical antennas is assumed to remain calibrated throughout the array movement, the corresponding virtual elements satisfy
\begin{align}
	\gamma_1=\gamma_2
	=\gamma_{M/2+1}
	=\gamma_{M/2+2}=1.
	\label{odd1111}
\end{align}

To rearrange the stacked observations according to the spatial order of the equivalent virtual sensors, we introduce the permutation matrix $\bm{J}$. Let $\bm{q}_\jmath$ denote the $\jmath$ -th column of the identity matrix $\bm{I}_M$. The permutation matrix is then constructed as
\begin{align}
 \bm{J}
=
 \begin{bmatrix}
	\bm{q}_1,
	\bm{q}_{2M_0+1} ,
	\bm{q}_2 ,
	\bm{q}_{2M_0+2} ,
	\cdots ,
	\bm{q}_{M/2} ,
	\bm{q}_{M}
\end{bmatrix} ^{T}.	
\end{align}

After stacking and spatially rearranging the  phase-aligned
observations, the output of the equivalent virtual array at time
$t$ is
\begin{align}
	\bm{z}(t)
	 =\bm{J}
	\begin{bmatrix}
		\bm{y}(t)\\
		\tilde{\bm{y}}_2(t+t_1+t_2)\\
		\tilde{\bm{y}}_1(t+t_1)\\
		\tilde{\bm{y}}_3(t+2t_1+t_2)
	\end{bmatrix}\nonumber
	 =\bm{\varGamma}\bm{A}\bm{s}(t)+\bm{\varepsilon}(t),
\end{align}
where  $\bm{\varepsilon}(t)\in\mathbb{C}^{M\times1}$ is the
AWGN vector with
$\operatorname{Cov}[\bm{\varepsilon}(t)]
=\sigma_n^2\bm{I}_M$, the equivalent GPE matrix $\bm{\varGamma}$ is
\begin{align}
	\bm{\varGamma}
	=\bm{J}\operatorname{blkdiag}\left\{
	\bm{\varGamma}_1,
	\bm{\varGamma}_2,
	\bm{\varGamma}_1,
	\bm{\varGamma}_2
	\right\}\bm{J}^{T}
	=\operatorname{diag}\{\bm{\gamma}\}
	\in\mathbb{C}^{M\times M},
\end{align}
and the equivalent steering matrix is explicitly given by
\begin{align}
	\bm{A}
	=\bm{J}
	[\bm{A}_{p0}^T,
	\bm{A}_{p2}^T,
	\bm{A}_{p1}^T,
	\bm{A}_{p3}^T]^T
	\in\mathbb{C}^{M\times K}.
	\label{equivalent_manifold}
\end{align}

At each position, $T$ snapshots are collected.
After carrier-phase synchronization, the observations with the
same snapshot index are stacked to construct $T$ equivalent
virtual-array snapshots. Accordingly, the received-signal matrix
is expressed as
\begin{align}
	\bm{Z}
	=\bm{\varGamma}\bm{A}\bm{S}+\bm{N}
	=\bm{G}\bm{D}_{\varphi}\bm{A}\bm{S}+\bm{N},
	\label{Z}
\end{align}
where
$\bm{\varGamma}=\operatorname{diag}\{\rho_1e^{j\varphi_1},\rho_2e^{j\varphi_2},
\ldots,\rho_Me^{j\varphi_M}\}
\in\mathbb{C}^{M\times M}$,
$\bm{G}=\operatorname{diag}\{\rho_1,\rho_2,\ldots,\rho_M\}
\in\mathbb{C}^{M\times M}$,
$\bm{D}_{\varphi}
=\operatorname{diag}\{e^{j\varphi_1},e^{j\varphi_2},\ldots,e^{j\varphi_M}\}
\in\mathbb{C}^{M\times M}$,
and
$\bm{S}
=[\bm{s}(1),\bm{s}(2),\ldots,\bm{s}(T)]
\in\mathbb{C}^{K\times T}$. The noise matrix is given by
	$\bm{N}=[\bm{\varepsilon}(1),\bm{\varepsilon}(2),
	\ldots,\bm{\varepsilon}(T)]
	\in\mathbb{C}^{M\times T}$.

To facilitate the subsequent subspace-based DOA estimation,
we next characterize the second-order spatial statistics
of the equivalent virtual-array observations. The covariance matrix of $\bm{Z}$
is given by
\begin{align}
	\bm{R}
	=\mathbb{E}\left\{\bm{Z}\bm{Z}^H\right\}
	=\bm{\varGamma}\bm{A}
	 \bm{R}_S
	\bm{A}^H\bm{\varGamma}^H
	+\sigma_n^2\bm{I}_M,
	\label{Finalmodel}
\end{align}
where
 $	\bm{R}_S=\mathbb{E}\{\bm{S}\bm{S}^H\}
=\operatorname{diag}
(\sigma_1^2,\sigma_2^2,\ldots,\sigma_K^2)$
is the source covariance matrix, and $\sigma_k^2$ is the
average power of the $k$-th source.
 To obtain the signal and noise subspaces required for subsequent DOA estimation,
we perform the eigendecomposition of $\bm{R}$ as
\begin{align}
	\bm{R}
	=\sum_{m=1}^{M}\upsilon_m\bm{e}_m\bm{e}_m^H
	=\bm{E}_S\bm{\varUpsilon}_S\bm{E}_S^H
	+\bm{E}_N\bm{\varUpsilon}_N\bm{E}_N^H,
	\label{R}
\end{align}
where $\upsilon_m$ is the $m$-th largest eigenvalue and
$\bm{e}_m$ is its associated eigenvector. The signal- and
noise-subspace matrices are
$\bm{E}_S=[\bm{e}_1,\ldots,\bm{e}_K]
\in\mathbb{C}^{M\times K}$ and
$\bm{E}_N=[\bm{e}_{K+1},\ldots,\bm{e}_M]
\in\mathbb{C}^{M\times(M-K)}$, respectively. 
$\bm{\varUpsilon}_S
=\operatorname{diag}\{\upsilon_1,\upsilon_2,\ldots,\upsilon_K\}$
and
$\bm{\varUpsilon}_N
=\operatorname{diag}
\{\upsilon_{K+1},\upsilon_{K+2},\ldots,\upsilon_M\}$.

For $\theta_k\in\{\theta_1,\theta_2,\ldots,\theta_K\}$,
the orthogonality between the signal and noise subspaces yields
\begin{align}
	\bm{E}_N^H\bm{\varGamma}\bm{a}(\theta_k)
	=\bm{0} _{(M-K)\times 1},
\end{align}
where the steering vector of the equivalent virtual array
associated with the $k$-th source is defined as
\begin{align}
	\bm{a}(\theta_k)
	=
	\left[
	e^{j2\pi\frac{p_1\cos(\theta_k)}{\lambda}},
	e^{j2\pi\frac{p_2\cos(\theta_k)}{\lambda}},
	\ldots,
	e^{j2\pi\frac{p_M\cos(\theta_k)}{\lambda}}
	\right]^T,
	\label{a} \nonumber
\end{align}
 and $p_m\in\mathcal{P}$ denotes the position
of the $m$-th virtual element.

Our objective is to estimate the DOAs in an MDA system in the
presence of GPEs. However, the GPEs distort the original array
manifold, inevitably degrading the DOA estimation accuracy.
To address this issue, we reformulate the DOA estimation problem
in the presence of GPEs as a joint optimization problem of the
DOAs and GPEs, which can be expressed as
\begin{subequations}
	\begin{alignat}{2}
		\text{(P1)}\quad
		\min_{\bm{\theta},\bm{\gamma}}\quad
		&\sum_{k=1}^{K}
		\left\|
		 {\bm{E}} _{N}^{H}\bm{\varGamma}
		\bm{a}\left({\theta}_{k}\right)
		\right\|_F^{2}
		&&
		 \\
		\text{s.t.}\quad
		&\gamma_m= \gamma_{m+1},
		\qquad &m&\in\mathcal{I}_{1},	\label{odd2}
		\\
		&\gamma_m= 1,
		\qquad &m&\in\mathcal{I}_{2},\label{even2}
	\end{alignat}
\end{subequations}
where $\bm{\theta}=[\theta_1,\ldots,\theta_K]^T$ denotes the DOA vector, 
 $
\mathcal{I}_{1}
=\{
3,5,\ldots,M/2-1
\}
\cup
\{
M/2+3,M/2+5,\ldots,M-1
\}
,
$
and 
$
\mathcal{I}_{2}
=\{
1,2,M/2+1,M/2+2
\}
.
$ The constraints in \eqref{odd2} and \eqref{even2} follow
directly from \eqref{even=odd} and \eqref{odd1111},
respectively. 
 It should be noted that the DOAs and GPEs are coupled in the objective function,
making problem (P1) difficult to solve directly. Therefore, in
the next section, we develop a two-stage alternating optimization
algorithm to estimate the DOAs and GPEs alternately.
\section{Proposed Self-Calibration AO Algorithm}

In this section, a self-calibration AO algorithm is developed for DOA estimation, which contains two stages. In Stage 1, given the initial DOA estimates, the original joint optimization problem reduces to a GPE estimation subproblem which is reformulated and solved in closed form using the Lagrange multiplier method. In Stage 2, the estimated GPEs are used to calibrate the received signal covariance matrix, after which the DOAs are updated via the MUSIC algorithm. At each iteration, the updated DOA estimates are used to refine the GPE estimates and covariance matrix calibration, which in turn further improve the DOA estimates in the subsequent iteration. In addition, an RIL initialization method is developed to obtain reliable initial DOAs by exploiting the identical GPEs shared by adjacent odd- and even-indexed virtual elements.

\subsection{Stage 1: GPE Estimation}
\label{stage:gpe}
In this subsection, the DOAs are fixed at their current estimates, such that the DOA estimation problem is reduced to a GPE optimization subproblem. The resulting problem is reformulated as a tractable linearly constrained quadratic program, for which a closed-form solution is derived using the Lagrange multiplier method. 

Given the DOA estimates \(\hat{\bm{\theta}} \) at the \(l\)-th iteration, the joint optimization problem (P1) can be directly reformulated as an estimation problem involving only the GPEs. Then, problem (P1) can be rewritten as
\begin{subequations}
	\begin{alignat}{2}
		\text{(P2)}\quad
		\min_{\bm{\gamma}}\quad
		&\sum_{k=1}^{K}
		\left\|
		 {\bm{E}} _{N}^{H}\bm{\varGamma}
		\bm{a}\left(\hat{\theta}_{k}\right)
		\right\|_F^{2}
		&&
		\label{P2}\\
		\text{s.t.}\quad
		&\gamma_m= \gamma_{m+1},
		\qquad &m&\in\mathcal{I}_{1},\label{P21111}
		\\
		&\gamma_m= 1,
		\qquad &m&\in\mathcal{I}_{2}.
		\label{P2oddeven}
	\end{alignat}
\end{subequations}
Let \(\bm{\varLambda}_k =\operatorname{diag}\!\left(\bm{a}(\hat{\theta}_k )\right)\in\mathbb{C}^{M\times M}\). \eqref{P2} can be rewritten as
\begin{align}
	&\min_{\bm{\gamma}}\quad
	\sum_{k=1}^{K}
	\left\|
	 {\bm{E}} _{N}^{H}\bm{\varGamma}
	\bm{a}\left(\hat{\theta}_{k} \right)
	\right\|_F^{2}\nonumber
	\Leftrightarrow\min_{\bm{\gamma}}\quad
	\sum_{k=1}^{K}
	\left\|
	 {\bm{E}} _{N}^{H}\bm{\varLambda}_k \bm{\gamma}
	\right\|_F^{2}\nonumber\\
	&\Leftrightarrow\min_{\bm{\gamma}}\quad
	\bm{\gamma}^{H}
	\left(
	\sum_{k=1}^{K}
	 \bm{\varLambda}_k  ^{H}
	 {\bm{E}}_{N}{\bm{E}} _{N} ^{H}
	\bm{\varLambda}_k 
	\right)
	\bm{\gamma}.
	\label{min est}
\end{align}
The first equality follows from the identity $\operatorname{diag}\{\bm{x}\}\bm{y}=\operatorname{diag}\{\bm{y}\}\bm{x}$. Since ${\bm{E}}_N^H\bm{\varLambda}_k\bm{\gamma}$ is a column vector, its squared Frobenius norm can be written in the corresponding Hermitian quadratic form, which gives the second equality.

To complete the reformulation of the GPE estimation problem, the constraints in \eqref{P21111} and \eqref{P2oddeven} are further expressed in matrix form as
\begin{align}
	\bm{W}_{0}^{H}\bm{\gamma}=\bm{b},
	\label{Constraint}
\end{align}
where $\bm{W}_0\in\mathbb{C}^{M\times(M/2+2)}$ is given by
\begin{align}
	\bm{W}_0=
	\begin{bmatrix}
		\bm{W}_1^T,
		\bm{W}_2^T
	\end{bmatrix},
\end{align}
where $\bm{W}_1\in\mathbb{C}^{4\times M}$ selects the four calibrated elements of the virtual array, whereas $\bm{W}_2\in\mathbb{C}^{(M/2-2)\times M}$ imposes the identical-GPE constraints on adjacent odd- and even-indexed virtual elements. Their entries are defined as
	\begin{align}
		\bm W_1
		&=
		\begin{bmatrix}
			\bm q_1 &
			\bm q_2 &
			\bm q_{M/2+1} &
			\bm q_{M/2+2}
		\end{bmatrix}^{T},
		\label{W1}\\
		\bm W_2
		&=
		 \begin{bmatrix}
			(\bm q_3-\bm q_4)^T\\
			(\bm q_5-\bm q_6)^T\\
			\vdots\\
			(\bm q_{M/2-1}-\bm q_{M/2})^T\\
			(\bm q_{M/2+3}-\bm q_{M/2+4})^T\\
			\vdots\\
			(\bm q_{M-1}-\bm q_M)^T
		\end{bmatrix},
		\label{W2}
	\end{align}
where $\bm q_m\in\mathbb{R}^{M}$ denotes the $m$-th standard basis vector.

The vector $\bm{b}$ is
\begin{align}
	\bm{b}
	=
	\begin{bmatrix}
		\bm{1}_{4\times1}^{T},
		\bm{0}_{(M/2-2)\times1}^{T}
	\end{bmatrix}^{T}
	\in\mathbb{R}^{(M/2+2)\times1}.
	\label{b}
\end{align}

In practice, the noise subspace matrix $\bm{E}_N$ is estimated
from the sample covariance matrix of the received data, yielding
$\hat{\bm{E}}_N$. We then substitute $\hat{\bm{E}}_N$ for
$\bm{E}_N$ in problem (P2). \begin{subequations}
	\begin{align}
		\text{(P3)}\quad
		\min_{\bm{\gamma}}\quad
		&\bm{\gamma}^{H}
		 \sum_{k=1}^{K}
		 \bm{\varLambda}_k  ^{H}
		\hat{\bm{E}}_{N}\hat{\bm{E}}_{N}^{H}
		\bm{\varLambda}_k 
		 \bm{\gamma}\\
		\text{s.t.}\quad
		&\bm{W}_{0}^{H}\bm{\gamma}=\bm{b}.
	\end{align}
\end{subequations}
Problem (P3) is a quadratic optimization problem with linear equality constraints. The optimal solution to this problem is given in Theorem~\ref{thm:theorem1}.

\begin{theorem}
	\label{thm:theorem1}
	For fixed DOA estimates $\hat{\bm{\theta}} $,
	suppose that the finite-sample matrix
	$\hat{\bm{V}} $ is nonsingular and that
	 $
	\bm{W}_0^H
	\hat{\bm{V}} ^{-1}
	\bm{W}_0
	$
	is nonsingular. Then, the solution to problem~(P3) is\cite{LiaoB1}
	\begin{align}
		\hat{\bm{\gamma}} 
		=
		 \hat{\bm{V}} ^{-1}
		\bm{W}_{0}
		 \left(
		\bm{W}_{0}^{H}
		 \hat{\bm{V}} ^{-1}
		\bm{W}_{0}
		 \right) ^{-1}
		\bm{b},
		\label{gamma}
	\end{align}
	where
	 $\hat{\bm{V}} =\sum_{k=1}^{K}\bm{\varLambda}_k ^H\hat{\bm{E}}_{N}\hat{\bm{E}}_{N}^{H}\bm{\varLambda}_k \in\mathbb{C}^{M\times M}$.
\end{theorem}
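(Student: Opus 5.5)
The plan is to treat (P3) as a complex-valued, equality-constrained quadratic program and solve it with the Lagrange multiplier method, as in \cite{LiaoB1}.

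First, I will record that $\hat{\bm{V}}=\sum_{k=1}^{K}\bm{\varLambda}_k^H\hat{\bm{E}}_N\hat{\bm{E}}_N^H\bm{\varLambda}_k$ is Hermitian positive semidefinite. Under the hypothesis that it is nonsingular, it is therefore positive definite. Consequently the objective $\bm{\gamma}^H\hat{\bm{V}}\bm{\gamma}$ is strictly convex in $\bm{\gamma}$, viewed as a real function of $(\mathfrak{R}\{\bm{\gamma}\},\operatorname{Im}\bm{\gamma})$. The feasible set $\{\bm{\gamma}:\bm{W}_0^H\bm{\gamma}=\bm{b}\}$ is affine. It is nonempty because the columns of $\bm{W}_0$ are linearly independent: they have disjoint supports in the sense of the constraint rows (\ref{W1})--(\ref{W2}), so $\bm{W}_0$ has full column rank. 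Hence any stationary point of the Lagrangian is the unique global minimizer.

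Next, I will form the Lagrangian
\begin{align}
\mathcal{L}(\bm{\gamma},\bm{\mu})=\bm{\gamma}^H\hat{\bm{V}}\bm{\gamma}-\bm{\mu}^H(\bm{W}_0^H\bm{\gamma}-\bm{b})-(\bm{W}_0^H\bm{\gamma}-\bm{b})^H\bm{\mu}, \nonumber
\end{align}
where $\bm{\mu}\in\mathbb{C}^{(M/2+2)\times1}$. This form keeps $\mathcal{L}$ real and enforces both the real and imaginary parts of the constraints. Setting the Wirtinger derivative with respect to $\bm{\gamma}^*$ to zero gives $\hat{\bm{V}}\bm{\gamma}=\bm{W}_0\bm{\mu}$, so $\bm{\gamma}=\hat{\bm{V}}^{-1}\bm{W}_0\bm{\mu}$. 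Substituting into the constraint yields $\bm{W}_0^H\hat{\bm{V}}^{-1}\bm{W}_0\bm{\mu}=\bm{b}$. Using the second hypothesis, that $\bm{W}_0^H\hat{\bm{V}}^{-1}\bm{W}_0$ is nonsingular (which in fact follows from positive definiteness and the full column rank of $\bm{W}_0$), we get $\bm{\mu}=(\bm{W}_0^H\hat{\bm{V}}^{-1}\bm{W}_0)^{-1}\bm{b}$. Back-substitution gives (\ref{gamma}).

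Finally, I will confirm optimality directly rather than relying only on the KKT argument. Take any feasible $\bm{\gamma}=\hat{\bm{\gamma}}+\bm{\delta}$, so that $\bm{W}_0^H\bm{\delta}=\bm{0}$. Then the cross term vanishes:
\begin{align}
\bm{\delta}^H\hat{\bm{V}}\hat{\bm{\gamma}}=\bm{\delta}^H\bm{W}_0\bm{\mu}=0. \nonumber
\end{align}
It follows that $\bm{\gamma}^H\hat{\bm{V}}\bm{\gamma}=\hat{\bm{\gamma}}^H\hat{\bm{V}}\hat{\bm{\gamma}}+\bm{\delta}^H\hat{\bm{V}}\bm{\delta}\ge\hat{\bm{\gamma}}^H\hat{\bm{V}}\hat{\bm{\gamma}}$, with equality only when $\bm{\delta}=\bm{0}$.

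The main care point is the complex-variable calculus. Handling the multiplier and the derivative with respect to $\bm{\gamma}^*$ correctly is what produces the factor-free form $\hat{\bm{V}}\bm{\gamma}=\bm{W}_0\bm{\mu}$, and the completion-of-square check above sidesteps any ambiguity in that step.
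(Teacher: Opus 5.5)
Your proposal is correct and follows essentially the same route as the paper's appendix. Both form a real-valued Lagrangian, set the Wirtinger derivative with respect to $\bm{\gamma}^*$ to zero, solve for $\bm{\mu}$ through the constraint, and back-substitute; your $\bm{\mu}$ is simply the negative of the paper's. Your extra steps go beyond the paper: positive definiteness of $\hat{\bm{V}}$, full column rank of $\bm{W}_0$ (which makes the nonsingularity of $\bm{W}_0^H\hat{\bm{V}}^{-1}\bm{W}_0$ automatic rather than an assumption), and the completion-of-squares check supply the uniqueness and global-optimality argument that the paper's stationarity-only derivation leaves implicit.
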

\begin{proof}
	See Appendix~\ref{app:proof_theorem1}.
\end{proof}
Theorem~\ref{thm:theorem1} provides an explicit solution to the constrained GPE estimation problem, avoiding the need for an additional iterative solver within Stage 1. This closed-form update makes the GPE estimation computationally convenient for the proposed AO framework.

When the GPE estimation matrix is singular or ill-conditioned,  diagonal loading is applied as $\hat{\bm V}_{\epsilon}=\hat{\bm V}+\epsilon\bm I_M$, where $\epsilon>0$ is a small regularization parameter. Since $\hat{\bm V}$ is positive semidefinite, this modification shifts its eigenvalues by $\epsilon$, yielding a positive-definite matrix.

For fixed DOA estimates, the GPE estimates can be obtained directly from the closed-form expression in Theorem~\ref{thm:theorem1}. Therefore, the GPE   can be directly updated at each AO iteration while satisfying the GPE constraints induced by the array movement. The resulting GPE estimate is subsequently used to calibrate the covariance matrix in Stage 2.

\subsection{Stage 2: DOA Estimation}
\label{stage:doa}
In Stage 2, the estimated GPEs are compensated in the received signal covariance matrix, from which the calibrated noise subspace is obtained. The DOAs are then updated using the MUSIC algorithm and fed back to Stage 1 for the next AO iteration.

We next calibrate the covariance matrix using $\hat{\bm{\gamma}} $ and update the DOAs via MUSIC. From the eigenvalues associated with the noise subspace in \eqref{R}, the noise power is estimated as
\begin{align}
	\hat{\sigma}_n^2
	=\frac{1}{M-K}\sum_{m=K+1}^{M}\upsilon_m.
\end{align}
The noise-subtracted covariance matrix $\tilde{\bm{R}} \in\mathbb{C}^{M\times M}$ is
\begin{align}
	\tilde{\bm{R}} 
	=\bm{R}-\hat{\sigma}_n^2\bm{I}
	\approx
	\bm{\varGamma}\bm{A}
	\mathbb{E}\{\bm{S}\bm{S}^H\}
	\bm{A}^H\bm{\varGamma}^H.
\end{align}
 Using $\hat{\bm{\varGamma}} =\operatorname{diag}(\hat{\bm{\gamma}} )\in\mathbb{C}^{M\times M}$, the calibrated covariance matrix $\hat{\bm{R}} 
\in\mathbb{C}^{M\times M}$ is obtained as
\begin{align}
	\hat{\bm{R}} 
	=
	 \hat{\bm{\varGamma}} ^{-1}
	\tilde{\bm{R}} 
	 \left(
	\hat{\bm{\varGamma}}  ^{H}
	 \right) ^{-1}
	\approx
	\bm{A}\mathbb{E}\{\bm{S}\bm{S}^H\}\bm{A}^H.
	\label{R_l}
\end{align}

After GPE compensation, the calibrated covariance matrix is used to update the DOAs. Specifically, its eigendecomposition is performed to obtain the noise subspace required by the MUSIC algorithm.
  Then we  perform the eigendecomposition of $\hat{\bm{R}} $,
\begin{align}
	\hat{\bm{R}} 
	=\sum_{m=1}^{M}
	\zeta_m \bm{u}_m 
	 \bm{u}_m  ^H,
	\label{R_l_evd}
\end{align}
where $\zeta_m $ is the $m$-th largest eigenvalue and $\bm{u}_m $ is the corresponding eigenvector. The noise-subspace matrix is
\begin{align}
	\tilde{\bm{E}}_N 
	=\left[
	\bm{u}_{K+1} ,
	\bm{u}_{K+2} ,
	\ldots,
	\bm{u}_{M} 
	\right]\in\mathbb{C}^{M\times(M-K)}.
	\label{E_N_l}
\end{align}
The MUSIC signal power spectrum is then computed as
\begin{align}
	 P (\theta)
	=\frac{1}{
		\bm{a}^H(\theta)
		\tilde{\bm{E}}_N 
		\tilde{\bm{E}}_N^H
		\bm{a}(\theta)}.
	\label{P_l}
\end{align}
The $K$ dominant peaks of ${P} (\theta)$ yield the updated DOA estimates $\hat{\bm{\theta}}$. The iterations terminate when the difference between two successive DOA estimates is smaller than $\Delta\theta$.

It should be noted that the DOA estimation in Stage 2 relies on the GPE estimates obtained in Stage 1, whose accuracy depends on the initial DOA estimates in turn. Therefore, sufficiently accurate initialization is essential for reliable GPE estimation and rapid convergence of the proposed AO algorithm. Otherwise, estimation errors may propagate between the two stages, leading to slow convergence, convergence to a local optimum, and large DOA estimation errors. To this end, a reliable DOA RIL initialization method is developed in the following subsection.

\subsection{RIL DOA Initialization}
\label{stage:el}
To obtain reliable initial DOA estimates for the proposed AO algorithm, an RIL initialization method is developed based on the GPE relationships induced by array movement. Two translated subarrays with identical GPEs are selected, allowing the rotational-invariance relation to be established and the initial DOAs to be estimated from the resulting eigenvalues.

As shown in Fig.~\ref{fig1}, the MDA is divided into
 $\mathcal{P}_{1}=\{1,3,\ldots,M/2-1,M/2+1,\ldots,M-1\}$
and
$\mathcal{P}_{2}=\{2,4,\ldots,M/2,M/2+2,\ldots,M\}$.
The corresponding elements of $\mathcal{P}_{1}$ and $\mathcal{P}_{2}$ are separated by $\frac{\lambda}{2}$. Moreover, according to \eqref{even=odd}, the two subarrays have identical GPEs.
\begin{figure}[h]
	\centering
	\includegraphics[width=1\linewidth]{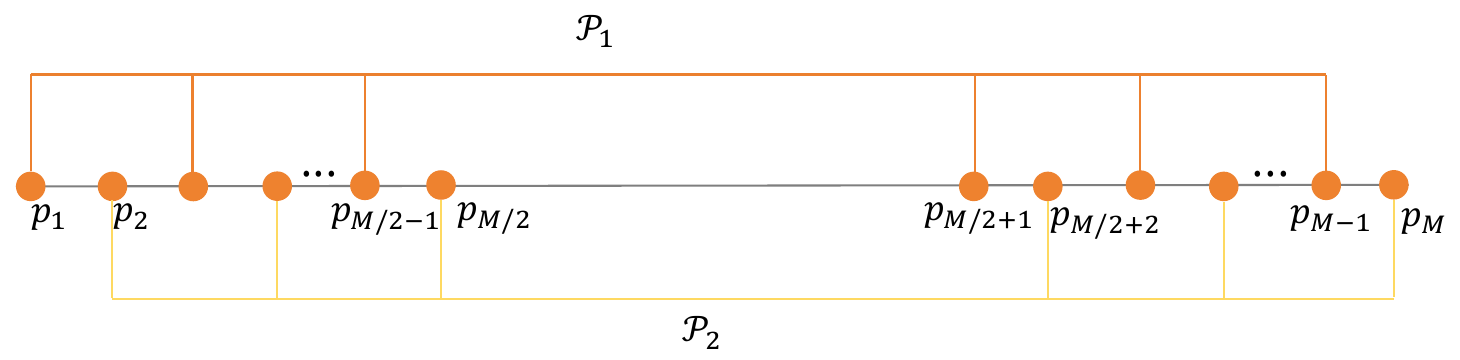}
	\caption{Array geometry for the RIL algorithm.}
	\label{fig1}
\end{figure}

Accordingly, the steering matrices corresponding to the two
subarrays $\mathcal{P}_1$ and $\mathcal{P}_2$ are defined as
\begin{align}
	\bm{A}_1 &= \bm{A}(\mathcal{P}_1,:)
	\in \mathbb{C}^{(M/2)\times K}, \\
	\bm{A}_2 &= \bm{A}(\mathcal{P}_2,:)
	\in \mathbb{C}^{(M/2)\times K},
\end{align}
where $\bm{A}(\mathcal{P}_i,:)$ denotes the steering matrix
constructed from the virtual array-element positions in $\mathcal{P}_i$.

The two subarrays satisfy the rotational-invariance relation
\begin{align}
	\bm{A}_2=\bm{A}_1\bm{\varPhi},
	\label{rotational}
\end{align}
where $\bm{\varPhi}\in\mathbb{C}^{K\times K}$ is defined as
\begin{align}
	\bm{\varPhi}
	=\operatorname{diag}\left(
	e^{j\frac{2\pi}{\lambda}d_1\cos\theta_1},
	e^{j\frac{2\pi}{\lambda}d_1\cos\theta_2},
	\ldots,
	e^{j\frac{2\pi}{\lambda}d_1\cos\theta_K}
	\right)
 . \nonumber
\end{align}

Since $\bm{E}_S$ and $\bm{\varGamma}\bm{A}$ span the same subspace, i.e., $\operatorname{span}(\bm{E}_S)=\operatorname{span}(\bm{\varGamma}\bm{A})$, there exists an invertible matrix $\tilde{\bm{T}}\in\mathbb{C}^{K\times K}$ such that $\bm{E}_S=\bm{\varGamma}\bm{A}\tilde{\bm{T}}$. Based on $\bm{A}_1$ and $\bm{A}_2$, the signal-subspace matrix is partitioned into $\bm{E}_{S_1}=\bm{E}_S(\mathcal{P}_1,:)\in\mathbb{C}^{(M/2)\times K}$ and $\bm{E}_{S_2}=\bm{E}_S(\mathcal{P}_2,:)\in\mathbb{C}^{(M/2)\times K}$, which are expressed as
\begin{align}
	\bm{E}_{S_1}
	&=\bm{\varGamma}_{S,1}\bm{A}_1\tilde{\bm{T}}
	=\tilde{\bm{A}}_1\tilde{\bm{T}},
	\label{E_s1}\\
	\bm{E}_{S_2}
	&=\bm{\varGamma}_{S,2}\bm{A}_2\tilde{\bm{T}}
	=\tilde{\bm{A}}_2\tilde{\bm{T}},
	\label{E_s2}
\end{align}
where $\tilde{\bm{A}}_1
=
\bm{\varGamma}_{S,1}\bm{A}_1
\in\mathbb{C}^{(M/2)\times K}$,
$\tilde{\bm{A}}_2
=
\bm{\varGamma}_{S,2}\bm{A}_2
\in\mathbb{C}^{(M/2)\times K}$. $\bm{\varGamma}_{S,1}=\operatorname{diag}(\bm{\gamma}_{S,1})\in\mathbb{C}^{(M/2)\times(M/2)}$ and $\bm{\varGamma}_{S,2}=\operatorname{diag}(\bm{\gamma}_{S,2})\in\mathbb{C}^{(M/2)\times(M/2)}$, with
\begin{align}
	\bm{\gamma}_{S,1}
	=[&1,\rho_3e^{j\varphi_3},\ldots,
	\rho_{M/2-1}e^{j\varphi_{M/2-1}},1,\nonumber\\
	&\rho_{M/2+3}e^{j\varphi_{M/2+3}},\ldots,
	\rho_{M-1}e^{j\varphi_{M-1}}]^T, \nonumber \\
	\bm{\gamma}_{S,2}
	=[&1,\rho_4e^{j\varphi_4},\ldots,
	\rho_{M/2}e^{j\varphi_{M/2}},1,\nonumber\\
	&\rho_{M/2+4}e^{j\varphi_{M/2+4}},\ldots,
	\rho_Me^{j\varphi_M}]^T. \nonumber
\end{align}
It follows from \eqref{even=odd} that $\bm{\varGamma}_{S,1}=\bm{\varGamma}_{S,2}$ and hence $\tilde{\bm{A}}_2=\tilde{\bm{A}}_1\bm{\varPhi}$. Combining \eqref{E_s1} and \eqref{E_s2} yields
\begin{align}
	\bm{E}_{S_2}=\bm{E}_{S_1}\tilde{\bm{\varPsi}},
	\label{psi}
\end{align}
where $\tilde{\bm{\varPsi}}=\tilde{\bm{T}}^{-1}\bm{\varPhi}\tilde{\bm{T}}\in\mathbb{C}^{K\times K}$ is similar to $\bm{\varPhi}$. The least-squares solution is
\begin{align}
	\tilde{\bm{\varPsi}}
	=\left(\bm{E}_{S_1}^{H}\bm{E}_{S_1}\right)^{-1}
	\bm{E}_{S_1}^{H}\bm{E}_{S_2}.
	\label{Psi}
\end{align}
Therefore, the initial DOA estimates are
\begin{align}
	\hat{\theta}_{k}
	=\arccos\!\left(
	\frac{\arg\!\left(\kappa_k(\tilde{\bm{\varPsi}})\right)\lambda}{2\pi d_1}
	\right),
	\quad k=1,2,\ldots,K,
	\label{DOA0}
\end{align}
where $\kappa_k(\tilde{\bm{\varPsi}})$ is the
eigenvalue of $\tilde{\bm{\varPsi}}$.
The resulting DOA estimates are then sorted in ascending order.

\begin{remark}
	The noise-subspace matrix $\bm{E}_N\in\mathbb{C}^{M\times(M-K)}$ must have a nonzero dimension, requiring
	\begin{align}
		M-K>0.
	\end{align}
	Since $\bm E_{S_1}\in\mathbb C^{(M/2)\times K}$,  the number of sources that can be handled by the
	RIL initialization is dimensionally upper-bounded by
	\begin{align}
		K\leq\frac{M}{2}.
		\label{Kfloor}
	\end{align}
	This upper bound is achievable provided that
	$\bm E_{S_1}$ has full column rank. However,
	\eqref{Kfloor} alone does not guarantee unique recovery.
\end{remark}

The RIL initialization for the MDA with GPEs is summarized in Algorithm~\ref{Algorithm_RIL}. Building on this initialization, the complete AO algorithm is summarized in Algorithm~\ref{Algorithm_AO_EL}.
\begin{algorithm}[t]
	\renewcommand{\algorithmicrequire}{\textbf{Input:}}
	\renewcommand{\algorithmicensure}{\textbf{Output:}}
	\caption{The RIL Initialization Algorithm}
	\label{Algorithm_RIL}
	\begin{algorithmic}[1]
		\Require Signal subspace $\bm{E}_S$,
		subarray index sets $\mathcal{P}_1$ and $\mathcal{P}_2$,
		number of sources $K$,
		displacement $d_1$,
		and wavelength $\lambda$;
		\State Construct
		$\bm{E}_{S_1}=\bm{E}_S(\mathcal{P}_1,:)$
		and
		$\bm{E}_{S_2}=\bm{E}_S(\mathcal{P}_2,:)$; \State Compute $\tilde{\bm{\varPsi}}$
		using \eqref{Psi};
		
		\State Perform the eigendecomposition of
		$\tilde{\bm{\varPsi}}$;
		
		\For{$k=1,2,\ldots,K$}
		\State Compute $\hat{\theta}_k^{(0)}$
		using \eqref{DOA0};
		\EndFor
		
		\State Set
		$\hat{\bm{\theta}}^{(0)}
		=
		[\hat{\theta}_1^{(0)},
		\ldots,
		\hat{\theta}_K^{(0)}]^T$
		and sort its entries;
		
		\Ensure Initial DOA estimates
		$\hat{\bm{\theta}}^{(0)}$;
	\end{algorithmic}
\end{algorithm}
\begin{algorithm}[t]
	\renewcommand{\algorithmicrequire}{\textbf{Input:}}
	\renewcommand{\algorithmicensure}{\textbf{Output:}}
	\caption{The Proposed AO Algorithm with RIL Initialization}
	\label{Algorithm_AO_EL}
	\begin{algorithmic}[1]
		\Require Received signal covariance matrix $\bm{R}$,
		maximum number of iterations $L_{\max}$,
		convergence threshold $\Delta\theta$,
		set of virtual array element positions $\mathcal{P}$,
		number of virtual array elements $M$,
		number of sources $K$,
		inter-element spacing $d$,
		and wavelength $\lambda$;
		
		\State Perform the eigendecomposition of $\bm{R}$ and obtain
		the signal and noise subspaces
		$\bm{E}_S$ and $\hat{\bm{E}}_N$ using \eqref{R};
		
		\State Estimate the noise power $\hat{\sigma}_n^2$ from the
		$M-K$ smallest eigenvalues;
		
		\State Obtain the initial DOA estimates
		$\hat{\bm{\theta}}^{(0)}$
		using the RIL initialization in
		Algorithm~\ref{Algorithm_RIL};
		
		\State Construct $\bm{W}_0$ using \eqref{W1} and \eqref{W2},
		and construct $\bm{b}$ using \eqref{b};
		
		\For{$l=0,1,\ldots,L_{\max}-1$}
		
		\Statex{\textbf{Stage 1: GPE estimation}}
		
		\State Compute $\hat{\bm{\gamma}}^{(l+1)}$
		using \eqref{gamma} with
		$\hat{\bm{\theta}}^{(l)}$ and $\hat{\bm{E}}_N$;
		
		\State Set
		$\hat{\bm{\varGamma}}^{(l+1)}
		=\operatorname{diag}
		(\hat{\bm{\gamma}}^{(l+1)})$;
		
		\Statex{\textbf{Stage 2: DOA estimation}}
		
		\State Calibrate $\bm{R}$ using
		$\hat{\bm{\varGamma}}^{(l+1)}$
		according to \eqref{R_l};
		
		\State Perform the eigendecomposition of the calibrated
		covariance matrix and obtain
		$\tilde{\bm{E}}_N^{(l+1)}$
		using \eqref{E_N_l};
		
		\State Estimate $\hat{\bm{\theta}}^{(l+1)}$
		from the $K$ dominant MUSIC peaks using \eqref{P_l},
		and sort its entries;
		
		\If{$
			\|\hat{\bm{\theta}}^{(l+1)}
			-\hat{\bm{\theta}}^{(l)}\|_2
			< \Delta\theta
			$}
		\State \textbf{break};
		\EndIf
		
		\EndFor
		
		\Ensure Estimated DOAs $\hat{\bm{\theta}}$
		and GPE vector $\hat{\bm{\gamma}}$;
	\end{algorithmic}
\end{algorithm}

 \section{Algorithm Analysis}

\subsection{CRB Analysis}

For the CRB analysis, the source powers and noise variance are assumed to be known. Given $T$ independent samples of a zero-mean Gaussian process, the $(m,n)$-th entry of the Fisher information matrix (FIM) is given by \cite{Kay1993}
\begin{align}
	[\bm{F}] _{m,n}
	&= T \operatorname{tr} \left\{
	\bm{R}^{-1}
	\frac{\partial\bm{R}}{\partial\bm{\eta}_m}
	\bm{R}^{-1}
	\frac{\partial\bm{R}}{\partial\bm{\eta}_n}
	\right\},
	\label{Fmn}
\end{align}
where $\bm{\eta}=[\bm{\theta}^{T},\bm{\rho}^{T},
\bm{\varphi}^{T}]^{T}\in\mathbb R^{(K+M-4)\times1}$, $\bm{\rho}=[\rho_{3},\rho_{5},\ldots,\rho_{M/2-1},
\rho_{M/2+3},\ldots,\rho_{M-1}]^{T}
\in\mathbb{R}^{(M/2-2)\times1}$, $\bm{\varphi}=[\varphi_{3}, \varphi_{5},\ldots,\varphi_{M/2-1}, \varphi_{M/2+3},\ldots,\varphi_{M-1}]^{T}\in\mathbb{R}^{(M/2-2)\times 1}$.

To evaluate the FIM in \eqref{Fmn}, we differentiate the covariance matrix in \eqref{Finalmodel} with respect to $\theta_k$, $\varphi_m$, and $\rho_m$. The derivative with respect to $\rho_m$ is derived in detail in \eqref{R_rho}, while the derivative with respect to $\varphi_m$ is obtained in a similar manner. The resulting derivatives are given in \eqref{R_theta}, \eqref{R_phi} and \eqref{R_rho}.

\begin{align}
	\frac{\partial \bm{R}}{\partial\theta_k}
	&=
	\dot{\bar{\bm{A}}} \bm{\varPi} _{k}\bm{R}_S\bar{\bm{A}}^H
	+
	\bar{\bm{A}}\bm{R}_S \bm{\varPi} _{k}\dot{\bar{\bm{A}}}^H,
	\label{R_theta}\\
	\frac{\partial \bm{R}}{\partial \varphi_m}
	&=
	\bm{G} \bar{\bm{\varPi}} _m\bm{C}_{\varphi}
	+
	\bm{C}_{\varphi}^{H} \bar{\bm{\varPi}} _m\bm{G}^H,
	\label{R_phi}
\end{align}
where
$\bar{\bm{A}}=\bm{G}\bm{D}_{\varphi}\bm{A}\in\mathbb{C}^{M\times K}$ and
$\dot{\bar{\bm{A}}}
=
[\frac{\partial\bar{\bm{a}}(\theta_1)}{\partial\theta_1},\ldots,
\frac{\partial\bar{\bm{a}}(\theta_K)}{\partial\theta_K}]
=
\bm{G}\bm{D}_{\varphi}\dot{\bm{A}}$.
Moreover,
$\bm{\varPi}_k=\bm{i}_k\bm{i}_k^T\in\mathbb{R}^{K\times K}$
denotes the $k$-th diagonal selection matrix,
$\bar{\bm{\varPi}}_m
=
\bm{q}_m\bm{q}_m^T
+
\bm{q}_{m+1}\bm{q}_{m+1}^T
\in\mathbb{R}^{M\times M}$,
$m\in\mathcal{I}_1$.
$\bm{C}_{\rho}
=
\bm{D}_{\varphi}\bm{A}\bm{R}_S\bm{A}^H
\bm{D}_{\varphi}^H\bm{G}^H
\in\mathbb{C}^{M\times M}$
and
$\bm{C}_{\varphi}
=
j\bm{D}_{\varphi}\bm{A}\bm{R}_S\bm{A}^H
\bm{D}_{\varphi}^H\bm{G}^H
\in\mathbb{C}^{M\times M}$.
\begin{figure*}[!t]
	\begingroup
	\begin{align}
		\frac{\partial \bm{R}}{\partial \rho_m}
		=
		\frac{\partial(\bm{G}\bm{D}_{\varphi}\bm{A}\bm{R}_S\bm{A}^H\bm{D}_{\varphi}^H\bm{G}^H)}{\partial \rho_m}
		=
		\frac{\partial \bm{G}}{\partial \rho_m}
		\bm{D}_{\varphi}\bm{A}\bm{R}_S\bm{A}^H\bm{D}_{\varphi}^H\bm{G}^H
		+
		\bm{G}\bm{D}_{\varphi}\bm{A}\bm{R}_S\bm{A}^H\bm{D}_{\varphi}^H
		\frac{\partial \bm{G}^H}{\partial \rho_m}
		=
		 \bar{\bm{\varPi}} _m\bm{C}_{\rho}
		+
		\bm{C}_{\rho}^{H} \bar{\bm{\varPi}} _m.
		\label{R_rho}
	\end{align}
	\endgroup

\end{figure*}

For the FIM submatrix $\bm{F}_{\theta\theta}$,
substituting \eqref{R_theta} into \eqref{Fmn}
yields \eqref{F_theta_theta_detail},
where the third equality follows from $c+c^*=2\mathfrak{R}\{c\}$, while the fourth uses the cyclic property of the trace and $\bm{\varPi}_p=\bm{i}_p\bm{i}_p^H$.
\begin{figure*}[!t]
	\begingroup

	\begin{equation}
		 \begin{aligned}
			[\bm{F}_{\bm{\theta}\bm{\theta}}]_{p,q}
			&=
			T\operatorname{tr}\bigg[
			\bm{R}^{-1}
			\frac{\partial\bm{R}}{\partial\theta_p}
			\bm{R}^{-1}
			\frac{\partial\bm{R}}{\partial\theta_q}
			\bigg]
			=
			T\operatorname{tr}\big[
			\bm{R}^{-1}
			\big(
			\dot{\bar{\bm{A}}}\bm{\varPi}_p\bm{R}_S\bar{\bm{A}}^H
			+
			\bar{\bm{A}}\bm{R}_S\bm{\varPi}_p\dot{\bar{\bm{A}}}^{H}
			\big)
			\bm{R}^{-1}
			\big(
			\dot{\bar{\bm{A}}}\bm{\varPi}_q\bm{R}_S\bar{\bm{A}}^H
			+
			\bar{\bm{A}}\bm{R}_S\bm{\varPi}_q\dot{\bar{\bm{A}}}^{H}
			\big)
			\big]
			\\
			&=
			2T\mathfrak{R}\left\{
			\operatorname{tr}\big[
			\bm{R}^{-1}
			\dot{\bar{\bm{A}}}\bm{\varPi}_p\bm{R}_S\bar{\bm{A}}^H
			\bm{R}^{-1}
			\dot{\bar{\bm{A}}}\bm{\varPi}_q\bm{R}_S\bar{\bm{A}}^H
			\big]
			\right\}
			+
			2T\mathfrak{R}\left\{
			\operatorname{tr}\big[
			\bm{R}^{-1}
			\dot{\bar{\bm{A}}}\bm{\varPi}_p\bm{R}_S\bar{\bm{A}}^H
			\bm{R}^{-1}
			\bar{\bm{A}}\bm{R}_S\bm{\varPi}_q\dot{\bar{\bm{A}}}^{H}
			\big]
			\right\}
			\\
			&=
			2T\mathfrak{R}\left\{
			\bm{i}_p^H\bm{R}_S\bar{\bm{A}}^H
			\bm{R}^{-1}\dot{\bar{\bm{A}}}\bm{i}_q
			\bm{i}_q^H\bm{R}_S\bar{\bm{A}}^H
			\bm{R}^{-1}\dot{\bar{\bm{A}}}\bm{i}_p
			\right\}
			+
			2T\mathfrak{R}\left\{
			\bm{i}_p^H\bm{R}_S\bar{\bm{A}}^H
			\bm{R}^{-1}\bar{\bm{A}}\bm{R}_S\bm{i}_q
			\bm{i}_q^H\dot{\bar{\bm{A}}}^{H}
			\bm{R}^{-1}\dot{\bar{\bm{A}}}\bm{i}_p
			\right\}
			\\
			&=
			2T\mathfrak{R}\left\{
			\big[
			\bm{R}_S\bar{\bm{A}}^H
			\bm{R}^{-1}\dot{\bar{\bm{A}}}
			\big]_{p,q}
			\big[
			\bm{R}_S\bar{\bm{A}}^H
			\bm{R}^{-1}\dot{\bar{\bm{A}}}
			\big]_{q,p}
			\right\}
			+
			2T\mathfrak{R}\left\{
			\big[
			\bm{R}_S\bar{\bm{A}}^H
			\bm{R}^{-1}\bar{\bm{A}}\bm{R}_S
			\big]_{p,q}
			\big[
			\dot{\bar{\bm{A}}}^{H}
			\bm{R}^{-1}\dot{\bar{\bm{A}}}
			\big]_{q,p}
			\right\}.
		\end{aligned} 
		\label{F_theta_theta_detail}
	\end{equation}	\endgroup
	\hrulefill
	\vspace{-2mm}
\end{figure*}
 Accordingly,
\begin{align}
	\bm{F} _{\theta\theta}
	&=
	2T\mathfrak{R}\left\{
	\big(
	\bm{R}_S\bar{\bm{A}}^H\bm{R}^{-1}\dot{\bar{\bm{A}}}
	\big)
	\circledast
	\left(
	\bm{R}_S\bar{\bm{A}}^H\bm{R}^{-1}\dot{\bar{\bm{A}}}
	\right)^T
	\right\}
	\nonumber\\
	&\quad+
	2T\mathfrak{R}\left\{
	\left(
	\bm{R}_S\bar{\bm{A}}^H\bm{R}^{-1}
	\bar{\bm{A}}\bm{R}_S
	\right)
	\circledast
	\left(
	\dot{\bar{\bm{A}}}^H\bm{R}^{-1}
	\dot{\bar{\bm{A}}}
	\right)^T
	\right\}.
\end{align}

The remaining FIM submatrices can be obtained following the same derivation as $\bm{F}_{\theta\theta}$. To compactly represent the joint selection of the two positions associated with each $\bar{\bm{\varPi}}_m$, define the pairwise aggregation matrix
\begin{align}
	\bm{Q}
	=
	\big[
	&\bm{q}_{3}+\bm{q}_{4},
	\bm{q}_{5}+\bm{q}_{6},
	\ldots,
	\bm{q}_{M/2-1}+\bm{q}_{M/2}, \notag\\
	&\bm{q}_{M/2+3}+\bm{q}_{M/2+4},
	\ldots,
	\bm{q}_{M-1}+\bm{q}_{M}
	\big]
	\in
	\mathbb{R}^{M\times(M/2-2)}.
\end{align}
The matrix $\bm{Q}$ is used to pair and sum the two equivalent virtual-element positions corresponding to each physical array element.

The remaining results are directly given below.
\begin{align}
	\bm{F}_{\bm{\theta}\bm{\rho}}
	&=
	2T\mathfrak{R}\Biggl\{
	\bigg[
	\left(
	\bm{R}_S\bar{\bm{A}}^H\bm{R}^{-1}
	\right)
	\circledast
	\left(
	\bm{C}_{\rho}\bm{R}^{-1}
	\dot{\bar{\bm{A}}}
	\right)^T
	\nonumber\\
	&\qquad\quad+
	\left(
	\bm{R}_S\bar{\bm{A}}^H\bm{R}^{-1}
	\bm{C}_{\rho}^H
	\right)
	\circledast
	\left(
	\bm{R}^{-1}\dot{\bar{\bm{A}}}
	\right)^T
	\bigg]\bm{Q}
	\Biggr\},
	\label{F_theta_rho}\\[1ex]
	\bm{F}_{\bm{\theta}\bm{\varphi}}
	&=
	2T\mathfrak{R}\Biggl\{
	\bigg[
	\left(
	\bm{R}_S\bar{\bm{A}}^H\bm{R}^{-1}\bm{G}
	\right)
	\circledast
	\left(
	\bm{C}_{\varphi}\bm{R}^{-1}
	\dot{\bar{\bm{A}}}
	\right)^T
	\nonumber\\
	&\qquad\quad+
	\left(
	\bm{R}_S\bar{\bm{A}}^H\bm{R}^{-1}
	\bm{C}_{\varphi}^H
	\right)
	\circledast
	\left(
	\bm{G}^H\bm{R}^{-1}
	\dot{\bar{\bm{A}}}
	\right)^T
	\bigg]\bm{Q}
	\Biggr\},
	\label{F_theta_phi}\\[1ex]
	\bm{F}_{\bm{\rho}\bm{\rho}}
	&=
	2T\mathfrak{R}\Biggl\{
	\bm{Q}^T
	\bigg[
	\left(
	\bm{C}_{\rho}\bm{R}^{-1}
	\right)
	\circledast
	\left(
	\bm{C}_{\rho}\bm{R}^{-1}
	\right)^T
	\nonumber\\
	&\qquad\quad+
	\left(
	\bm{C}_{\rho}\bm{R}^{-1}\bm{C}_{\rho}^H
	\right)
	\circledast
	\left(
	\bm{R}^{-1}
	\right)^T
	\bigg]
	\bm{Q}
	\Biggr\},
	\label{F_rho_rho}\\[1ex]
	\bm{F}_{\bm{\rho}\bm{\varphi}}
	&=
	2T\mathfrak{R}\Biggl\{
	\bm{Q}^T
	\bigg[
	\left(
	\bm{C}_{\rho}\bm{R}^{-1}\bm{G}
	\right)
	\circledast
	\left(
	\bm{C}_{\varphi}\bm{R}^{-1}
	\right)^T
	\nonumber\\
	&\qquad\quad+
	\left(
	\bm{C}_{\rho}\bm{R}^{-1}
	\bm{C}_{\varphi}^H
	\right)
	\circledast
	\left(
	\bm{G}^H\bm{R}^{-1}
	\right)^T
	\bigg]
	\bm{Q}
	\Biggr\},
	\label{F_rho_phi}\\[1ex]
	\bm{F}_{\bm{\varphi}\bm{\varphi}}
	&=
	2T\mathfrak{R}\Biggl\{
	\bm{Q}^T
	\bigg[
	\left(
	\bm{C}_{\varphi}\bm{R}^{-1}\bm{G}
	\right)
	\circledast
	\left(
	\bm{C}_{\varphi}\bm{R}^{-1}\bm{G}
	\right)^T
	\nonumber\\
	&\qquad\quad+
	\left(
	\bm{C}_{\varphi}\bm{R}^{-1}
	\bm{C}_{\varphi}^H
	\right)
	\circledast
	\left(
	\bm{G}^H\bm{R}^{-1}\bm{G}
	\right)^T
	\bigg]
	\bm{Q}
	\Biggr\}.
	\label{F_phi_phi}
\end{align}

Collecting the above six FIM submatrices, the complete FIM
associated with $\bm{\eta}=[\bm{\theta}^{T},
\bm{\rho}^{T},\bm{\varphi}^{T}]^{T}$ is expressed as
\begin{equation} 
	\bm{F}= 
	\begin{bmatrix} 
		\bm{F}_{\bm{\theta}\bm{\theta}} 
		& \bm{F}_{\bm{\theta}\bm{\rho}} 
		& \bm{F}_{\bm{\theta}\bm{\varphi}} 
		\\ 
		\bm{F}_{\bm{\theta}\bm{\rho}}^{T} 
		& \bm{F}_{\bm{\rho}\bm{\rho}} 
		& \bm{F}_{\bm{\rho}\bm{\varphi}} 
		\\ 
		\bm{F}_{\bm{\theta}\bm{\varphi}}^{T} 
		& \bm{F}_{\bm{\rho}\bm{\varphi}}^{T} 
		& \bm{F}_{\bm{\varphi}\bm{\varphi}} 
	\end{bmatrix}. 
	\label{complete_FIM} 
\end{equation}
  Therefore, the CRB matrix of \(\bm{\eta}\) is given by
\begin{equation}
	\bm{\mathrm{CRB}}(\bm{\eta})=\bm{F}^{-1}.
	\label{joint_CRB}
\end{equation}
Since \(\bm{\eta}\) incorporates all the unknown parameters, the CRB submatrices corresponding to the DOAs, gain errors, and phase errors can be extracted from \(\bm{F}^{-1}\). To facilitate comparison with the root mean-square error (RMSE) results in the simulations, their scalar CRBs are respectively defined as

\begin{align}
	\mathrm{CRB}(\bm{\theta})
	&=
	 \sqrt{
		\frac{1}{K}
		\operatorname{tr}\!\left(
		[\bm{\mathrm{CRB}}(\bm{\eta})]_{
			\mathcal{J}_{\theta},\mathcal{J}_{\theta}}
		\right)
	},
	\label{CRB_theta}
	\\
	\mathrm{CRB}(\bm{\rho})
	&=
	 \sqrt{
		\frac{1}{\frac{M}{2}-2}
		\operatorname{tr}\!\left(
		[\bm{\mathrm{CRB}}(\bm{\eta})]_{
			\mathcal{J}_{\rho},\mathcal{J}_{\rho}}
		\right)
	},
	\label{CRB_rho}
	\\
	\mathrm{CRB}(\bm{\varphi})
	&=
	 \sqrt{
		\frac{1}{\frac{M}{2}-2}
		\operatorname{tr}\!\left(
		[\bm{\mathrm{CRB}}(\bm{\eta})]_{
			\mathcal{J}_{\varphi},\mathcal{J}_{\varphi}}
		\right)
	},
	\label{CRB_varphi}
\end{align}
where 
\begin{align}
	\mathcal{J}_{\theta}
	&=\{1,\ldots,K\},\nonumber\\
	\mathcal{J}_{\rho}
	&=\left\{K+1,\ldots,K+\frac{M}{2}-2\right\},\nonumber\\
	\mathcal{J}_{\varphi}
	&=\left\{K+\frac{M}{2}-1,\ldots,K+M-4\right\}.\nonumber
\end{align}

	\subsection{Computational Complexity Analysis}
The computational complexity of the proposed algorithm is analyzed as follows. 
Let $N_{\theta}$ denote the number of search-grid points for MUSIC and 
 $I_{\mathrm{iter}}$ denote the average number of AO iterations required for 
convergence. For the RIL initialization, the complexities of constructing the 
sample covariance matrix and performing its eigendecomposition are 
$\mathcal{O}(M^{2}T)$ and $\mathcal{O}(M^{3})$, respectively. Moreover, the 
least-squares calculation in \eqref{Psi} and the subsequent eigendecomposition for 
initial DOA estimation entail $\mathcal{O}(MK^{2}+K^{3})$ operations. Thus, 
the computational complexity of the RIL initialization is
$\mathcal{O}(M^{2}T+M^{3}+MK^{2}+K^{3})$.

For each iteration of Algorithm~\ref{Algorithm_AO_EL}, the computational complexity of 
Stage~1
 mainly comes from the construction of 
$\hat{\bm E}_{N}\hat{\bm E}_{N}^{H}$, the calculation of $\hat{\bm V}$, and 
the matrix inversions in \eqref{gamma}, whose complexities are 
$\mathcal{O}(M^{2}(M-K))$, $\mathcal{O}(KM^{2})$, and $\mathcal{O}(M^{3})$, 
respectively. In Stage~2, the covariance-matrix calibration in 
\eqref{R_l} requires $\mathcal{O}(M^{2})$ operations, while the 
eigendecomposition in \eqref{R_l_evd} and the MUSIC spectral search in 
\eqref{P_l} have complexities of $\mathcal{O}(M^{3})$ and 
$\mathcal{O}(N_{\theta}M(M-K))$, respectively. Therefore, the overall 
computational complexity of the proposed algorithm is
\begin{equation}
	 \begin{aligned}
		C_{\mathrm{total}}
		=&~\mathcal{O}\!\left(M^{2}T+M^{3}+MK^{2}+K^{3}\right) \\
		&+I_{\mathrm{iter}}
		\mathcal{O}\!\left(M^{3}+KM^{2}+N_{\theta}M(M-K)\right).
	\end{aligned} 
	\label{eq:computational_complexity}
\end{equation}
Since $K\leq  M/2$ for the RIL initialization, 
the above complexity can be further simplified as
 $\mathcal{O}(M^{2}T+M^{3})
+I_{\mathrm{iter}}\mathcal{O}(M^{3}+N_{\theta}M^{2})$.
In particular, when $N_{\theta}\gg M$, the MUSIC spectral search dominates the 
per-iteration computational complexity.

\section{Numerical Simulations}

In this section, numerical simulations are conducted to evaluate the performance of the proposed self-calibration AO algorithm. Unless otherwise specified, the number of virtual array antennas is $M=16$, the equivalent-array aperture is $H=1.5(M-1)d$, the source DOAs are $(30^\circ,60^\circ,135^\circ)$, the signal-to-noise ratio (SNR) is $10~\mathrm{dB}$ and the number of snapshots is $T=100$.  For the MUSIC algorithm, the coarse and fine search step sizes are set to $1^\circ$ and $0.1^\circ$. In addition, the convergence threshold $\Delta\theta$ and the maximum number of iterations for the AO algorithm are set to $10^{-4}$ and $30$, respectively. The SNR is defined as
\begin{align}
	\mathrm{SNR}
	=
	10\lg
	\left(
	\frac{\sigma_k^2}{\sigma_n^2}
	\right).
\end{align}

To evaluate the performance of the proposed algorithms, we use RMSE to describe the estimation error and the RMSE of the DOA estimation is defined as
\begin{align}
	\mathrm{RMSE}
	=
	 \sqrt{
		\frac{1}{KX}
		\sum_{r=1}^{X}
		\sum_{k=1}^{K}
		\left(
		\hat{\theta}_{r,k} -\theta_k
		\right)^2
	},
\end{align}
where $X$ denotes the number of Monte Carlo trials and $X$ is set to $300$, $\hat{\theta}_{r,k} $ denotes the estimate of the $k$-th DOA in the  $r$-th Monte Carlo trial, and $\theta_k$ denotes the corresponding true DOA.

Following \cite{YeRobust}, the gain and phase errors are generated as
\begin{align}
	\rho
	&=
	1+\sqrt{12}\sigma_\rho\varepsilon,
	\label{rho}\\
	\varphi
	&=
	\sqrt{12}\sigma_\varphi\tau,
	\label{varphi}
\end{align}
where $\varepsilon$ and $\tau$ are independently and uniformly distributed in $[-0.5,0.5]$, and $\sigma_\rho$ and $\sigma_\varphi$ denote the standard deviations of the gain and phase errors, respectively.  Unless otherwise specified, $\sigma_\varphi=40^\circ$ and $\sigma_\rho=0.2$.

The DOA estimation in the $r$-th Monte Carlo trial is considered successful  if the estimation error
of each source is less than $1^\circ$. The set of successful trials and the corresponding success probability $P_{\mathrm{s}}$ are defined as
\begin{align}
	&\mathcal{S}
	=
	\Bigl\{
	 r \in\{1,\ldots, X \}
	\mid
	\max_{1\leq k\leq K}
	\bigl|
	\hat{\theta} _{r,k} -\theta_k
	\bigr|
	<\delta_\theta
	\Bigr\},\delta_\theta=1^\circ 
	\label{success_set}\\
	&P_{\mathrm{s}}
	=
	 \frac{\lvert\mathcal{S}\rvert}{X}
	,
	\label{success_probability}
\end{align}
where $\hat{\theta}_{r,k}$ represents the angle estimate of the $k$-th source in the  $r$-th Monte Carlo trial, $\delta_\theta=1^\circ$ denotes the success threshold and $\lvert\mathcal{S}\rvert$ denotes the number of successful trials.

In the following simulations, the proposed AO algorithm with RIL initialization is referred to as \textbf{Baseline 1}. For performance comparison, we consider the following baseline schemes:
\begin{itemize}
	\item \textbf{Baseline 2}: the AO algorithm with random initialization, where $K$ initial angles are randomly generated over  $(0^\circ,180^\circ)$.
	\item \textbf{Baseline 3}: the proposed RIL algorithm applied to the MDA in the presence of GPEs.
	\item \textbf{Baseline 4}: conventional one-dimensional MUSIC applied to the MDA in the presence of GPEs. The GPEs are not calibrated.

\end{itemize}
In addition, the CRB under the considered MDA configuration  is included as a theoretical performance benchmark.

To illustrate the GPE estimation accuracy of the proposed AO algorithm, Fig.~\ref{fig:verify_parameters} compares the estimated gain coefficients and phase errors with their corresponding true values for the odd-indexed elements of the virtual array, respectively. For the virtual elements associated with the calibrated first antenna, the gain coefficients are set to unity and the phase errors to zero. Specifically, the virtual elements indexed by $1,3,\ldots,15$ are considered. The gain coefficients $\rho$ and phase errors $\varphi$ are randomly generated according to \eqref{rho} and \eqref{varphi}, respectively. It can be observed that the gain-coefficient estimates obtained by the proposed algorithm closely follow the true values, while its phase-error estimates nearly coincide with the true values. The results demonstrate that the proposed AO algorithm can accurately estimate the GPEs. By contrast, the AO algorithm with random initialization exhibits relatively large estimation errors for both $\rho$ and $\varphi$, which also provides preliminary evidence of the importance of reliable initial DOA estimates for the AO algorithm.

\begin{figure}[!t]
	\centering
	\begin{subfigure}[t]{0.48\linewidth}
		\centering
		\includegraphics[width=\linewidth]{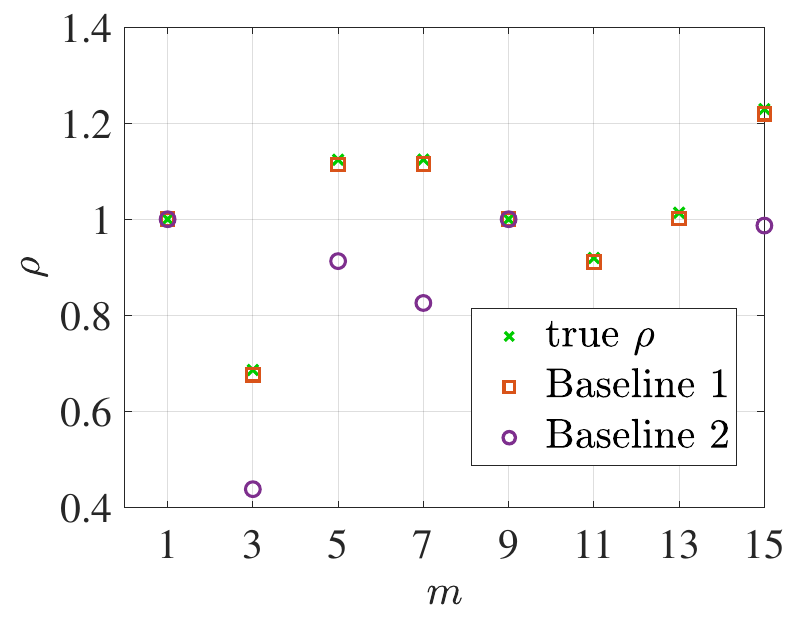}
		\caption{Estimated and true values of $\rho$.}
		\label{fig:verify_rho}
	\end{subfigure}
	\hfill
	\begin{subfigure}[t]{0.48\linewidth}
		\centering
		\includegraphics[width=\linewidth]{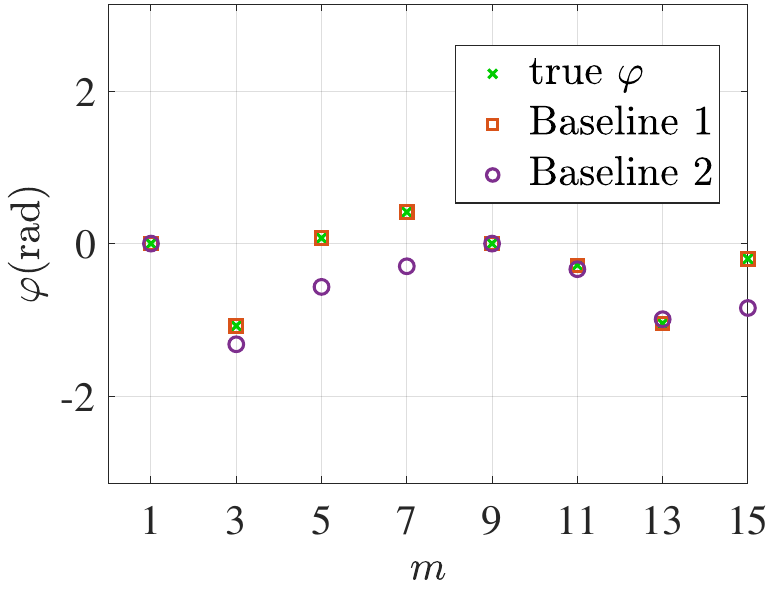}
		\caption{Estimated and true values of $\varphi$.}
		\label{fig:verify_phi}
	\end{subfigure}
	\caption{Verification of the estimated GPEs.}
	\label{fig:verify_parameters}
\end{figure}

To illustrate the convergence behavior of the proposed AO algorithm, Fig.~\ref{theta_RMSE_Iter} presents the DOA estimation RMSE versus the iteration index at SNRs of $10$ dB and $20$ dB. For the AO algorithm with random initialization, the RMSE changes only slightly with the iteration index and remains relatively high at both SNRs. This is because the initial DOAs provided by the AO algorithm with random initialization deviate substantially from their true values. Consequently, the first stage produces inaccurate GPE estimates, and the covariance matrix cannot be well calibrated. The resulting DOA estimates still contain large errors and adversely affect GPE estimation in the next iteration. Therefore, the RMSE of the AO algorithm with random initialization remains high throughout the iterative process.
By contrast, the RMSE of the proposed algorithm gradually decreases with the iteration index and eventually becomes stable. The RMSE at $20$ dB is consistently lower than that at $10$ dB. These results show that the proposed iterative algorithm progressively improves the DOA estimation accuracy.  This is because RIL first provides relatively accurate initial DOA estimates. The first stage can therefore obtain reliable GPE estimates and then we can use them to calibrate the covariance matrix in stage 2. The calibrated covariance matrix produces a more accurate noise-subspace estimate and improves the subsequent DOA estimation. The updated DOAs are then used to refine the GPE estimates in the next iteration. Through this alternating procedure, the DOA and GPE estimation errors gradually decrease. Moreover, all curves of the proposed algorithm stabilize within approximately 15 iterations. Thus, the proposed algorithm requires only a limited number of iterations to converge.
\begin{figure}[!t]
	\centering
	\includegraphics[width=0.85\linewidth]{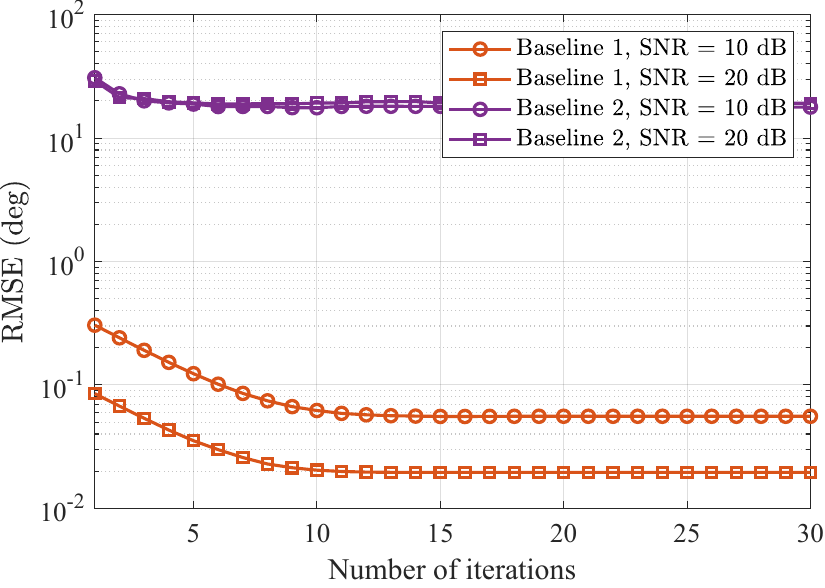}
	\caption{DOA estimation RMSE versus the iteration index.}
	\label{theta_RMSE_Iter}
\end{figure}

\begin{figure}[!t]
	\centering
	\includegraphics[width=0.85\linewidth]{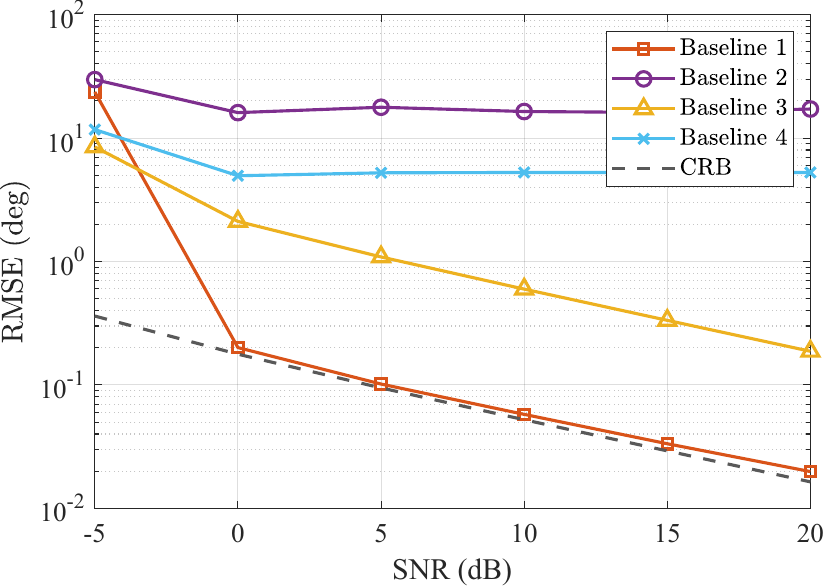}
	\caption{DOA estimation RMSE versus SNR.}
	\label{theta_SNR}
\end{figure}

Fig.~\ref{theta_SNR} depicts the DOA estimation RMSE of different methods versus SNR, where  the SNR takes values in $\{-5,0,5,10,15,20\}~\mathrm{dB}$. As the SNR increases, the RMSE of the proposed algorithm decreases significantly and gradually approaches the corresponding CRB. The proposed algorithm also achieves a lower RMSE than  the AO algorithm with random initialization over the entire SNR range. By contrast, MUSIC exhibits a pronounced performance floor because it does not compensate for the GPEs.
Besides, at an SNR of $-5$ dB, the proposed algorithm performs worse than RIL. The main reason is that RIL cannot provide sufficiently accurate initial DOA estimates under severe noise contamination. Consequently, the subsequent GPE estimation and covariance-matrix calibration cannot effectively improve the DOA estimates. As the SNR increases, the RIL estimates become more reliable, and the performance gap between the proposed algorithm and RIL gradually narrows. The two curves intersect between $-5$ dB and $0$ dB. At SNRs of $5$ dB and above, the proposed algorithm outperforms RIL. This result is consistent with the initialization sensitivity observed in Fig.~\ref{theta_RMSE_Iter}.

\begin{figure}[!t]
	\centering
	\includegraphics[width=0.85\linewidth]{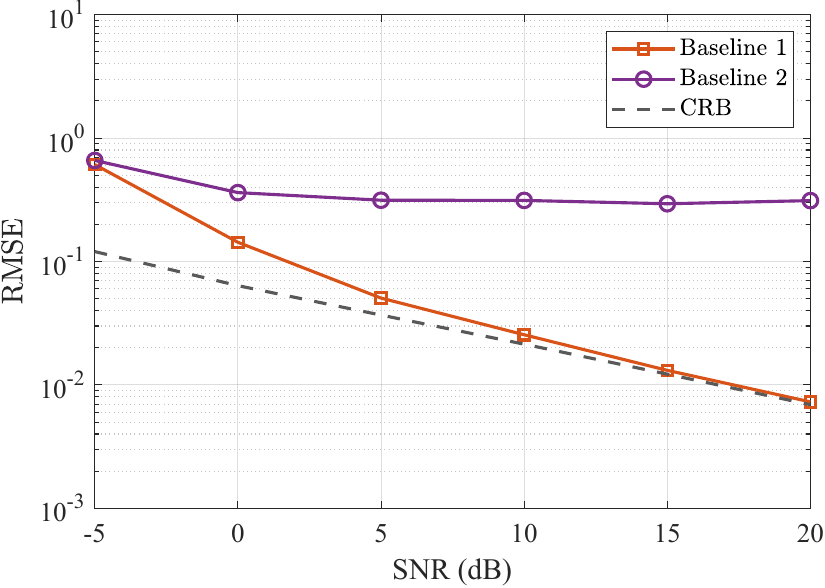}
	\caption{Gain estimation RMSE versus SNR.}
	\label{G_SNR}
\end{figure}
\begin{figure}[!t]
	\centering
	\includegraphics[width=0.85\linewidth]{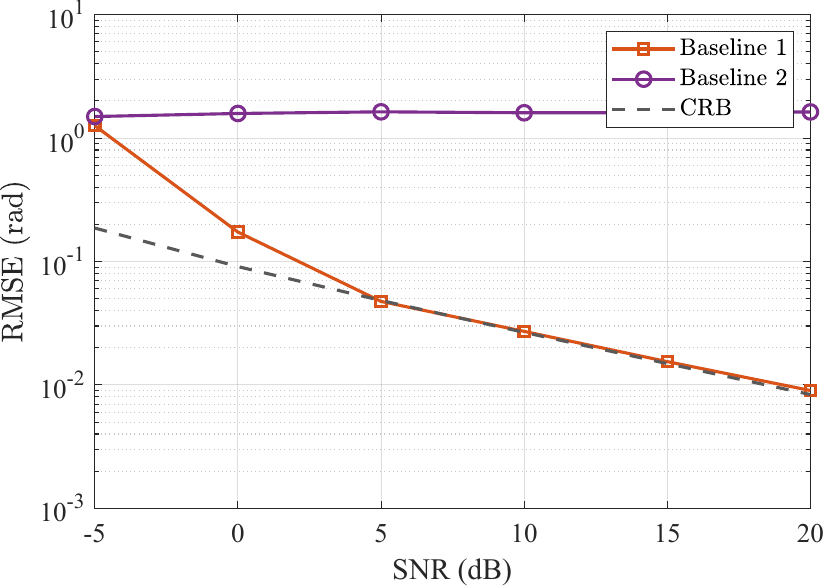}
	\caption{Phase estimation RMSE versus SNR.}
	\label{P_SNR}
\end{figure}
In order to evaluate the GPE estimation performance of the proposed AO algorithm, Fig.~\ref{G_SNR}  and Fig.~\ref{P_SNR} depict the RMSEs of the estimated gain coefficients $\rho$ and phase errors $\varphi$ versus SNR. In this simulation, the SNR takes values in $\{-5,0,5,10,15,20\}~\mathrm{dB}$. The RMSEs of GPEs are computed over the $M-4$ uncalibrated virtual elements, excluding the four calibrated reference elements.
For the proposed algorithm, the RMSEs of both $\rho$ and $\varphi$ decrease significantly as the SNR increases. They are also substantially lower than the corresponding RMSEs of the  AO algorithm with random initialization. By contrast, both RMSEs of the AO algorithm with random initialization remain relatively high and change only slightly with SNR.
Moreover, the gain and phase RMSEs of the proposed algorithm gradually approach their corresponding CRBs at medium and high SNRs. These results demonstrate that the proposed AO algorithm can effectively recover the unknown GPEs. The accurate GPE estimates also explain why the DOA estimation performance of the proposed algorithm approaches the CRB in Fig.~\ref{theta_SNR}. As $\hat{\bm{\varGamma}}$ approaches the true $\bm{\varGamma}$, the calibrated covariance matrix becomes closer to the ideal  covariance matrix. This reduces the array-manifold mismatch and improves the orthogonality between the steering vectors and the noise subspace. Therefore, more accurate GPE estimates directly improve the subsequent DOA estimation performance.

Fig.~\ref{theta_RMSE_H} presents the DOA estimation RMSE of different methods versus the equivalent-array aperture $H$, where   $H$ increases from $15\lambda/2$ to $15\lambda$ in increments of $\lambda/2$. In particular, when $H=15\lambda/2$, the MDA reduces to a ULA.
It can be observed that the RMSE of the proposed algorithm is the lowest over the entire range of $H$ and generally declines as $H$ increases. This is because the enlarged equivalent-array aperture improves the angular resolution. By iteratively estimating the GPEs and calibrating the covariance matrix, the proposed algorithm can effectively benefit from the enlarged aperture.
By contrast,  the AO algorithm with random initialization exhibits much larger and strongly fluctuating RMSEs. Meanwhile, MUSIC shows no consistent improvement as $H$ increases and its RMSE rises markedly at larger apertures because of the array-manifold mismatch caused by uncompensated GPEs. In addition, the RMSE of RIL is consistently lower than that of MUSIC but remains nearly unchanged over the considered range of $H$. This is because RIL only exploits the rotational-invariance relation to obtain the initial DOA estimates. It does not explicitly estimate all the GPEs or calibrate the complete covariance matrix. Therefore, increasing $H$ provides only limited improvement in the estimation accuracy of RIL.

\begin{figure}[!t]
	\centering
	\includegraphics[width=0.85\linewidth]{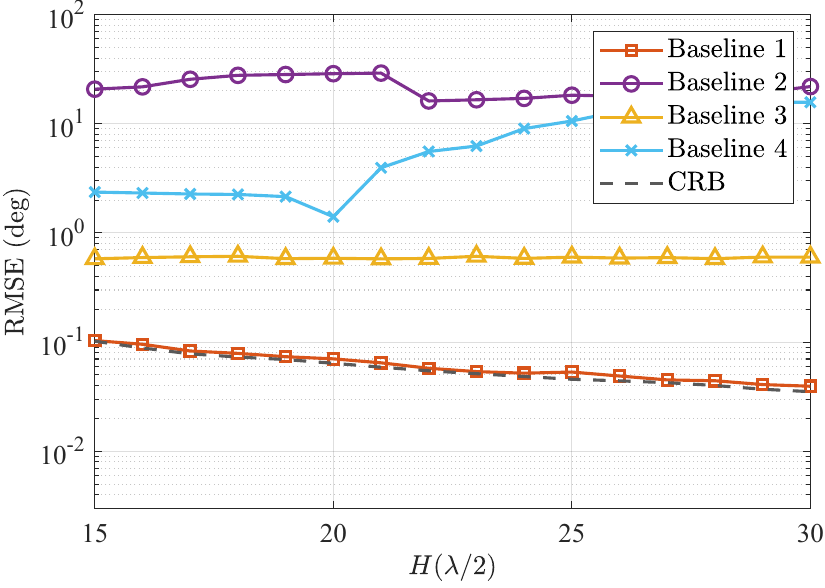}
	\caption{DOA estimation RMSE versus the equivalent-array aperture.}
	\label{theta_RMSE_H}
\end{figure}
\begin{figure}[!t]
	\centering
	\includegraphics[width=0.85\linewidth]{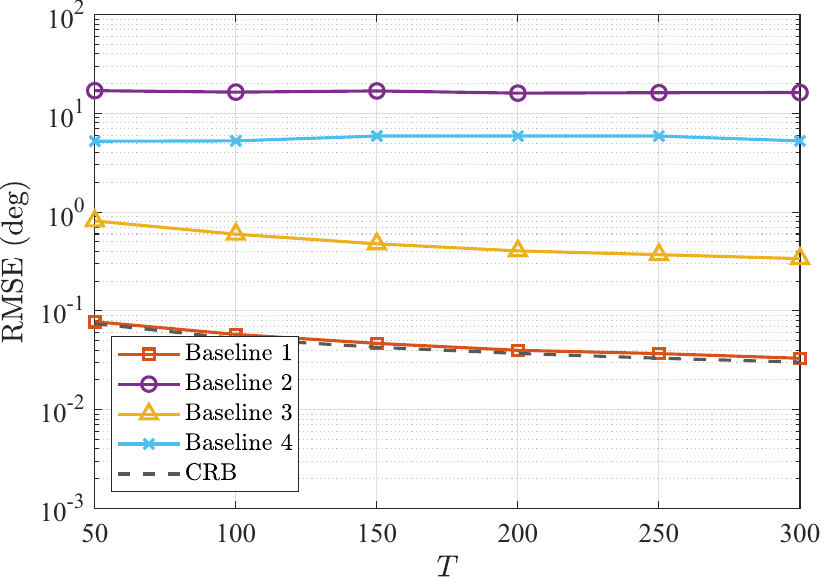}
	\caption{DOA estimation RMSE versus the number of snapshots.}
	\label{theta_RMSE_T}
\end{figure}
Fig.~\ref{theta_RMSE_T} compares the DOA estimation RMSEs of different methods as the number of snapshots $T$ increases from $50$ to $300$ in increments of $50$. The proposed AO algorithm consistently achieves the lowest RMSE and closely follows the CRB, with its performance steadily improving as $T$ increases. This is because more snapshots improve the estimation accuracy of the sample covariance matrix, signal and noise subspaces, and GPEs. RIL also benefits from increasing $T$ and consistently achieves a substantially lower RMSE than MUSIC. In contrast, MUSIC remains at a relatively high RMSE level with only minor fluctuations, since additional snapshots cannot eliminate the array-manifold mismatch caused by unknown GPEs. Similarly, the AO algorithm with random initialization shows little improvement as $T$ increases, indicating that improved covariance estimation alone cannot compensate for the performance degradation caused by unreliable initialization.

\begin{figure}[!t]
	\centering
	\includegraphics[width=0.85\linewidth]{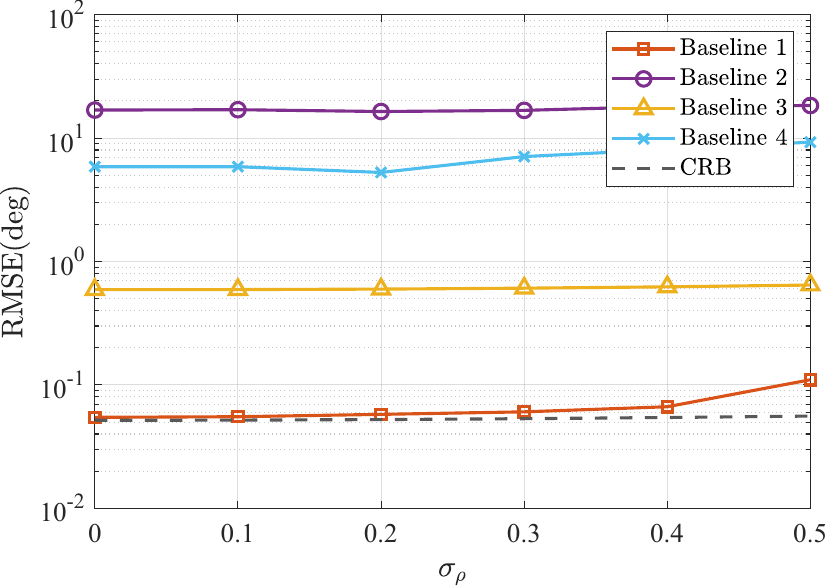}
	\caption{DOA estimation RMSE versus $\sigma_\rho$.}
	\label{RMSE_g}
\end{figure}
Fig.~\ref{RMSE_g} presents the DOA estimation RMSE versus the gain-error standard deviation $\sigma_\rho$, where $\sigma_\varphi$ is set to $40^\circ$.
As $\sigma_\rho$ increases from $0$ to $0.4$, the RMSE of the proposed algorithm increases only moderately. A more noticeable degradation occurs at $\sigma_\rho=0.5$. Nevertheless, the proposed algorithm still achieves the lowest RMSE among all the considered methods. This demonstrates that the iterative GPE estimation and covariance matrix calibration effectively mitigate the effect of gain errors on DOA estimation. By contrast, the AO algorithm with random initialization remains at a considerably higher RMSE level over the entire range, further highlighting the importance of reliable initialization for the subsequent GPE estimation and array calibration.
RIL also maintains a substantially lower RMSE than MUSIC over the entire considered range of $\sigma_\rho$. Since a phase-error standard deviation of $40^\circ$ is present even when $\sigma_\rho=0$, MUSIC already suffers from considerable array-manifold mismatch at the starting point. Its RMSE generally increases as the gain-error level becomes larger, although minor nonmonotonic fluctuations can be observed.

Fig.~\ref{RMSE_phi} depicts the DOA estimation RMSE versus the phase-error standard deviation $\sigma_\varphi$, where $\sigma_\rho=0.2$. The value of $\sigma_\varphi$ increases from $0^\circ$ to $60^\circ$ in increments of $10^\circ$.
It can be observed that the RMSE of the proposed algorithm remains nearly unchanged as $\sigma_\varphi$ increases and is the lowest at all considered nonzero phase-error levels. At $\sigma_\varphi=0^\circ$, MUSIC performs slightly better, possibly because GPE estimation and covariance-matrix calibration introduce small additional errors with a finite number of snapshots. As $\sigma_\varphi$ increases, uncompensated phase errors cause array-manifold mismatch, and the RMSE of MUSIC rapidly exceeds that of the proposed algorithm.
When $\sigma_\varphi\leq20^\circ$, MUSIC also achieves a lower RMSE than RIL. This is because the array-manifold mismatch caused by the phase errors remains limited, while MUSIC exploits the full-array observations. As $\sigma_\varphi$ further increases, the performance of MUSIC degrades rapidly, whereas the RMSE of RIL remains nearly unchanged. This result demonstrates the robustness of RIL against phase errors. The RMSE of the AO algorithm with random initialization remains relatively high over the entire considered range, indicating that phase errors are not the main factor limiting its performance. Instead, its performance is mainly limited by unreliable initialization.

 \begin{figure}[!t]
	\centering
	\includegraphics[width=0.85\linewidth]{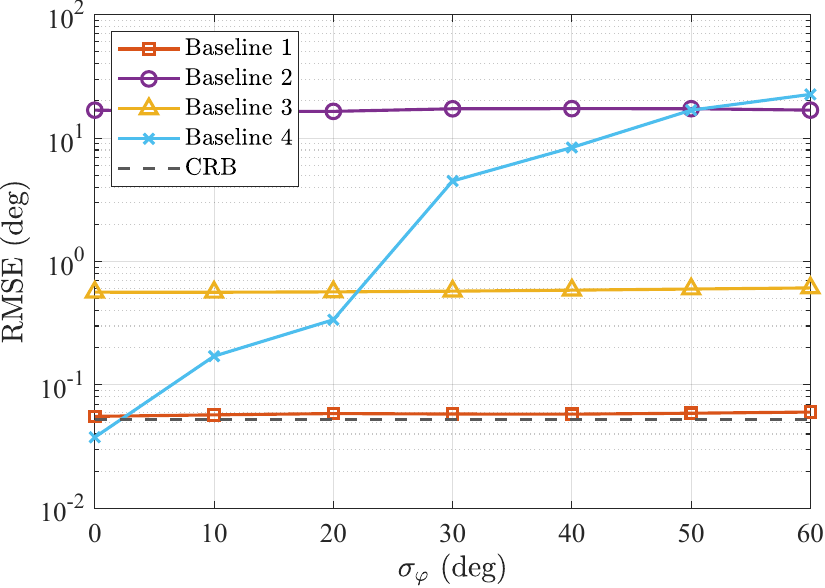}
	\caption{DOA estimation RMSE versus $\sigma_\varphi$.}
	\label{RMSE_phi}
\end{figure}
Fig.~\ref{sucess_K} shows the DOA estimation success probability versus the number of sources. The simulation parameters are set to $\mathrm{SNR}=10~\mathrm{dB}$ and $K$ varies from $2$ to $7$ with a step of one and the DOAs are given by
 $\bm{\theta}_s = \left[ 30^\circ,\, 40^\circ,\, \dots,\, 30^\circ + 10^\circ(K - 1) \right]^T$. For $M=16$, the considered values of $K$ satisfy the upper bound
 in \eqref{Kfloor}.
We can observe that MUSIC applied to the MDA almost completely fails in the presence of GPEs when $K\ge 3$. Moreover, the success probability of RIL decreases significantly as $K$ increases. This is because a larger number of sources increases the dimension of the signal subspace while reducing that of the noise subspace, making the subspace estimation and the separation of multiple DOAs more difficult.
In contrast, the success probability of the proposed algorithm remains close to $100\%$ over the entire considered range of $K$. This demonstrates that the proposed AO refinement can significantly improve the accuracy and reliability of multiple-source DOA estimation. However, the success probability of the AO algorithm with random initialization remains at a relatively low level.

\begin{figure}[!t]
	\centering
	\includegraphics[width=0.85\linewidth]{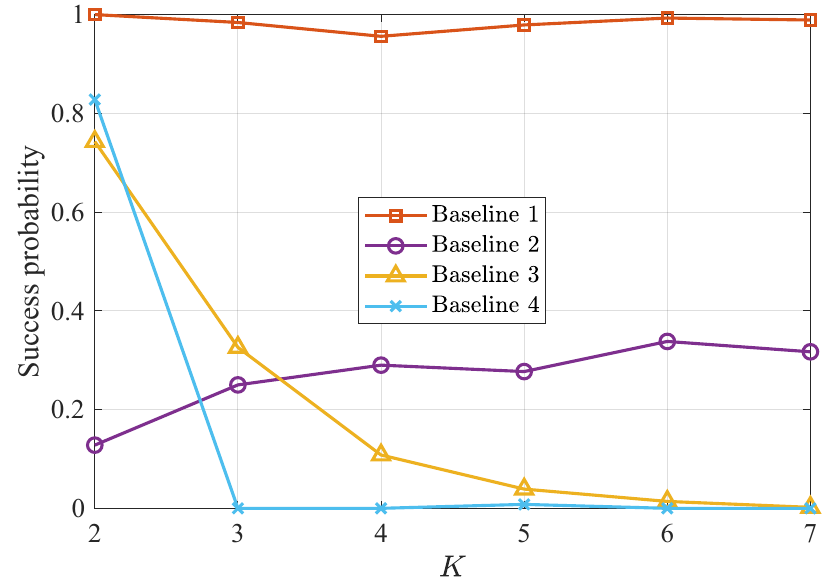}
	\caption{DOA estimation success probability versus $K$.}
	\label{sucess_K}
\end{figure}

	\section{Conclusion}

In this paper, we proposed a two-stage self-calibration DOA estimation method for MDA systems affected by unknown GPEs. By exploiting the structural relationships introduced by array motion, the proposed method enables joint GPE calibration and DOA refinement without requiring prior knowledge of the unknown array errors. The proposed AO framework jointly refines the GPE and DOA estimates, while the RIL scheme provides reliable initial estimates for the iterative procedure. Theoretical analysis was further supported by the derived CRB. Simulation results showed that the proposed algorithm provides accurate GPE estimation, reliable convergence, and robust DOA estimation over a wide range of system conditions. In particular, its performance improves with increasing SNR and approaches the CRB, while conventional MUSIC suffers significant degradation in the presence of uncompensated GPEs. These results demonstrate the effectiveness of the proposed self-calibration framework for MDA-based DOA estimation under practical array imperfections.
\appendices
\section{Proof of  Theorem~\ref{thm:theorem1}}
\label{app:proof_theorem1}
In this appendix, we prove  Theorem~\ref{thm:theorem1} by deriving the closed-form
solution to problem (P3) using the Lagrange multiplier method.
\begin{proof}
Let $\mathcal{L}(\bm{\gamma},\bm{\mu})$ denote the Lagrangian. Since $\mathcal{L}(\bm{\gamma},\bm{\mu})$ is real-valued, it can be constructed as
\begin{align} 
	\mathcal L(\bm\gamma,\bm\mu) &=\bm\gamma^H\hat{\bm V} \bm\gamma+2 \mathfrak{R} \left\{ \bm\mu^H\left(\bm W_0^H\bm\gamma-\bm b\right) \right\}\nonumber\\
	&=\bm\gamma^H\hat{\bm V} \bm\gamma +\bm\mu^H(\bm W_0^H\bm\gamma-\bm b)+(\bm\gamma^H\bm W_0-\bm b^H)\bm\mu, \label{langrange}
\end{align}
where $\bm{\mu}$ is the Lagrange multiplier vector. 

Taking the Wirtinger derivative with respect to $\bm\gamma^*$ gives
\begin{align}
	\frac{\partial \mathcal L}{\partial\bm\gamma^*} = \hat{\bm V} \bm\gamma+\bm W_0\bm\mu.
\end{align}

Setting it to zero yields
\begin{align}		
	\hat{\bm\gamma}  = -\hat{\bm V} ^{-1} \bm W_0\bm\mu. \label{gamma hat}
\end{align}

Since $\bm{\mu}$ in \eqref{gamma hat} is unknown, substituting \eqref{gamma hat} into the constraint $\bm{W}_0^{H}\bm{\gamma}=\bm{b}$ gives
\begin{align}
	\bm{\mu}=- \left(\bm{W}_0^{H}\hat{\bm{V}} ^{-1}\bm{W}_0 \right) ^{-1}\bm{b}.\label{mu}
\end{align}

Substituting \eqref{mu} into \eqref{gamma hat} yields
\begin{align}
	\hat{\bm{\gamma}}  =  \hat{\bm{V}}  ^{-1} \bm{W}_{0} \left(\bm{W}_{0}^{H} \hat{\bm{V}} ^{-1} \bm{W}_{0} \right) ^{-1} \bm{b}, 
\end{align}
which is identical to the closed-form solution in \eqref{gamma}.

This completes the proof of  Theorem~\ref{thm:theorem1}.
\end{proof}

\FloatBarrier
\raggedbottom 
\bibliographystyle{IEEEtran}
\bibliography{mybib_Abbreviation_1}

@ARTICLE{ZhangRYISAC,
	author={Zhou, Wenxing and Zhang, Ruoyu and Chen, Guangyi and Wu, Wen},
	journal={IEEE Open J. Commun. Soc.}, 
	title={Integrated sensing and communication waveform design: a survey}, 
	year={2022},
	volume={3},
	number={},
	pages={1930-1949},
	doi={10.1109/OJCOMS.2022.3215683}}

@ARTICLE{ZhangRYDOA,
	author={Zhang, Ruoyu and Shim, Byonghyo and Wu, Wen},
	journal={IEEE Trans. Signal Process.}, 
	title={Direction-of-arrival estimation for large antenna arrays with hybrid analog and digital architectures}, 
	year={2022},
	volume={70},
	number={},
	pages={72-88},
	doi={10.1109/TSP.2021.3119768}}

@ARTICLE{JiangMAAssisted,
	author={Liu, Rang and Li, Ming and Liu, Qian and Swindlehurst, A. Lee},
	journal={IEEE Trans. Wireless Commun.}, 
	title={{DOA} estimation-oriented joint array partitioning and beamforming designs for {ISAC} systems}, 
	year={2025},
	volume={24},
	number={3},
	pages={2052-2066},
	doi={10.1109/TWC.2024.3516037}}

@ARTICLE{ZhangRYMIMO,
	author={Zhang, Ruoyu and Cheng, Lei and Zhang, Wei and Guan, Xinrong and Cai, Yueming and Wu, Wen and Zhang, Rui},
	journal={IEEE Wireless Commun. Lett.}, 
	title={Channel estimation for movable-antenna {MIMO} systems via tensor decomposition}, 
	year={2024},
	volume={13},
	number={11},
	pages={3089-3093},
	doi={10.1109/LWC.2024.3450592}}

@INPROCEEDINGS{WFAO,
	author={Friedlander, B. and Weiss, A. J. },
	booktitle={Proc. IEEE Int. Conf. Acoust., Speech Signal Process. (ICASSP)}, 
	title={Eigenstructure methods for direction finding with sensor gain and phase uncertainties}, 
	year={1988},
	volume={5},
	number={},
	pages={2681-2684 },
	doi={10.1109/ICASSP.1988.197201}}

@ARTICLE{LiuAFWF,
	author={Liu, Aifei and Liao, Guisheng and Zeng, Cao and Yang, Zhiwei and Xu, Qing},
	journal={IEEE Trans. Signal Process.}, 
	title={An eigenstructure method for estimating {DOA} and sensor gain-phase errors}, 
	year={2011},
	volume={59},
	number={12},
	pages={5944-5956},
	doi={10.1109/TSP.2011.2165064}}

@ARTICLE{DNN,
	author={Labbaf, Navid and Dalili Oskouei, Hamid Reza and Abedi, Mohammad Reza},
	journal={IEEE Trans. Green Commun. Netw.}, 
	title={Robust {DOA} estimation in a uniform circular array antenna with errors and unknown parameters using deep learning}, 
	year={2023},
	volume={7},
	number={4},
	pages={2143-2152},
	doi={10.1109/TGCN.2023.3294448}}

@ARTICLE{HuMIMODeepLearn,
	author={Hu, Die and Zhang, Yonghao and He, Lianghua and Wu, Jun},
	journal={IEEE Wireless Commun. Lett.}, 
	title={Low-complexity deep-learning-based {DOA} estimation for hybrid massive {MIMO} systems with uniform circular arrays}, 
	year={2020},
	volume={9},
	number={1},
	pages={83-86},
	doi={10.1109/LWC.2019.2942595}}

@ARTICLE{YTMIMOGPE,
	author={Tian, Ye and Liu, Wei and Xu, He and Liu, Shuai and Dong, Zhiyan},
	journal={IEEE Trans. Wireless Commun.}, 
	title={{2-D} {DOA} estimation of incoherently distributed sources considering gain-phase perturbations in massive {MIMO} systems}, 
	year={2022},
	volume={21},
	number={2},
	pages={1143-1155},
	doi={10.1109/TWC.2021.3102483}}

@ARTICLE{LiGPEConvex,
	author={Li, Dingkang and Sun, Guohao and Ji, Yuandong},
	journal={IEEE Signal Process. Lett.}, 
	title={{DOA} estimation via distributionally robust optimization with gain-phase errors}, 
	year={2026},
	volume={33},
	number={},
	pages={1101-1105},
	doi={10.1109/LSP.2026.3667071}}

@ARTICLE{WenGPEMIMO,
	author={Wen, Fangqing and Wang, Xianpeng and Gui, Guan and Ohtsuki, Tomoaki and Niyato, Dusit and Adachi, Fumiyuki},
	journal={IEEE Trans. Wireless Commun.}, 
	title={Fast {DOD/DOA} estimation for massive conformal {MIMO} arrays with unknown gain-phase errors}, 
	year={2026},
	volume={25},
	number={},
	pages={15866-15878},
	doi={10.1109/TWC.2026.3686764}}

@ARTICLE{ZhouMUDNA,
	author={Zhou, Lang and Zhou, Gongjian},
	journal={IEEE Internet Things J.}, 
	title={Multi-{ULAs} dilated nested array: a sparse antenna array design with enhanced {DOA} estimation performance}, 
	year={2026},
	volume={13},
	number={14},
	pages={32459-32462},
	doi={10.1109/JIOT.2026.3703787}}

@ARTICLE{LishuangDOF,
	author={Li, Shuang and Zhang, Xiao-Ping},
	journal={IEEE Signal Process. Lett.}, 
	title={A new approach to construct virtual array with increased degrees of freedom for moving sparse arrays}, 
	year={2020},
	volume={27},
	number={},
	pages={805-809},
	doi={10.1109/LSP.2020.2993956}}

@ARTICLE{QinGDDOF,
	author={Qin, Guodong and Zhang, Yimin D. and Amin, Moeness G.},
	journal={IEEE Signal Process. Lett.}, 
	title={{DOA} estimation exploiting moving dilated nested arrays}, 
	year={2019},
	volume={26},
	number={3},
	pages={490-494},
	doi={10.1109/LSP.2019.2894467}}

@ARTICLE{ChenNA,
	author={Chen, Hua and Lin, Hongguang and Liu, Wei and Wang, Qing and Shen, Qing and Wang, Gang},
	journal={IEEE Trans. Signal Process.}, 
	title={Augmented multi-subarray dilated nested array with enhanced degrees of freedom and reduced mutual coupling}, 
	year={2024},
	volume={72},
	number={},
	pages={1387-1399},
	doi={10.1109/TSP.2024.3374557}}

@ARTICLE{LishuangDAfamily,
	author={Li, Shuang and Zhang, Xiao-Ping},
	journal={IEEE Trans. Signal Process.}, 
	title={Dilated arrays: a family of sparse arrays with increased uniform degrees of freedom and reduced mutual coupling on a moving platform}, 
	year={2021},
	volume={69},
	number={},
	pages={3367-3382},
	doi={10.1109/TSP.2021.3083988}}

@ARTICLE{Lishuangq-dilated,
	author={Li, Shuang and Zhang, Xiao-Ping and Ruan, Ding-Liang and Zeng, Hao},
	journal={IEEE Internet Things J.}, 
	title={Increasing the uniform degrees of freedom for moving q-dilated arrays}, 
	year={2023},
	volume={10},
	number={1},
	pages={194-209},
	doi={10.1109/JIOT.2022.3200955}}

@ARTICLE{GuoHDHighOrder,
	author={Guo, Haodong and Chen, Hua and Guo, Muran and Liu, Wei and Tian, Ye and Wang, Gang and So, Hing Cheung},
	journal={IEEE Trans. Signal Process.}, 
	title={Generalized dilated array scheme exploiting high-order virtual co-array for a moving platform}, 
	year={2026},
	volume={74},
	number={},
	pages={878-891},
	doi={10.1109/TSP.2026.3664800}}

@ARTICLE{Convex,
	author={Tirer, Tom and Bialer, Oded},
	journal={IEEE Trans. Aerosp. Electron. Syst.}, 
	title={Direction of arrival estimation and phase-correction for noncoherent subarrays: a convex optimization approach}, 
	year={2022},
	volume={58},
	number={6},
	pages={5571-5585},
	doi={10.1109/TAES.2022.3175465}}

@ARTICLE{YetCalibrateNA,
	author={Tian, Ye and Xu, He},
	journal={IEEE Commun. Lett.}, 
	title={Calibration nested arrays for underdetermined {DOA} estimation using fourth-order cumulant}, 
	year={2020},
	volume={24},
	number={9},
	pages={1949-1952},
	doi={10.1109/LCOMM.2020.3001936}}

@article{YeMAGPE,
	author  = {Chengzhi Ye and Ruoyu Zhang and Changcheng Hu and Lei Yao and Wen Wu},
	title   = {{DOA} estimation for movable antenna array systems under gain and phase errors},
	journal = {Phys. Commun.},
	volume  = {72},
	note = {{Art. no. 102808}},
	year    = {2025},
	doi     = {10.1016/j.phycom.2025.102808}
}

@misc{YeRotable,
	author = {Ye, Chengzhi and Zhang, Ruoyu and Du, Jincheng and
	Ma, Wenyan and Wu, Qingqing and Wu, Wen and Zhang, Rui},
	title = {Rotatable antenna-enhanced wireless sensing with uniform
	sparse array via tensor decomposition},
	howpublished = {\emph{IEEE Wireless Commun. Lett.}, early access},
	month = {Aug. 11,},
	year = {2026},
	note = {doi: 10.1109/LWC.2026.3722674}
}

@misc{YeSelfCalibtate,
	title={Self-calibration {DOA} estimation for movable antenna systems with antenna position errors}, 
	author={Chengzhi Ye and Ruoyu Zhang and Wen Wu and Byonghyo Shim},
	year={2026},
	eprint={2605.23140},
	archivePrefix={arXiv},
	primaryClass={eess.SP},
	url={https://arxiv.org/abs/2605.23140}, 
}

@misc{YeGlobalCom,
	title={Movable antenna enhanced wireless sensing via steering vector correlation and {CRB} optimization}, 
	author={Chengzhi Ye and Ruoyu Zhang and Shichao Wei and Wen Wu and Byonghyo Shim},
	year={2026},
	eprint={2608.09383},
	archivePrefix={arXiv},
	primaryClass={eess.SP},
	url={https://arxiv.org/abs/2608.09383}, 
}

@article{LiaoB1,
	author={Liao, Bin and Chan, Shing Chow},
	journal={IEEE Trans. Antennas Propag.}, 
	title={Direction finding with partly calibrated uniform linear arrays}, 
	year={2012},
	volume={60},
	number={2},
	pages={922-929},
	doi={10.1109/TAP.2011.2173144}}

@ARTICLE{YeRobust,
	author={Ye, Chengzhi and Zhang, Ruoyu and Hu, Changcheng and Yao, Lei and Wu, Wen and Yuen, Chau},
	journal={IEEE Wireless Commun. Lett.}, 
	title={Robust {DOA} estimation for movable antenna arrays with partial gain and phase errors}, 
	year={2026},
	volume={15},
	number={},
	pages={545-549},
	doi={10.1109/LWC.2025.3630221}}

@book{Kay1993,
	title={{F}undamentals of {S}tatistical {S}ignal {P}rocessing: {E}stimation {T}heory},
	author={Kay, Steven M.},
	year={1993},
	publisher={Prentice Hall},
	address={Upper Saddle River, NJ, USA}
}

@ARTICLE{10522654,
	author={Molaei, Amir Masoud and Zakeri, Bijan and Andargoli, Seyed Mehdi Hosseini and Abbasi, Muhammad Ali Babar and Fusco, Vincent and Yurduseven, Okan},
	journal={IEEE Access}, 
	title={A comprehensive review of direction-of-arrival estimation and localization approaches in mixed-field sources scenario}, 
	year={2024},
	volume={12},
	number={},
	pages={65883-65918},
	doi={10.1109/ACCESS.2024.3398351}}

@ARTICLE{526899,
	author={Krim, H. and Viberg, M.},
	journal={IEEE Signal Process. Mag.}, 
	title={Two decades of array signal processing research: the parametric approach}, 
	year={1996},
	volume={13},
	number={4},
	pages={67-94},
	doi={10.1109/79.526899}}

@ARTICLE{10950424,
	author={Mansourian, Amir and Fadakar, Alireza and Akhavan, Saeed and Maham, Behrouz},
	journal={IEEE Open J. Commun. Soc.}, 
	title={Robust {3-D} multi-source localization with a movable antenna array via sparse signal processing}, 
	year={2025},
	volume={6},
	number={},
	pages={3664-3682},
	doi={10.1109/OJCOMS.2025.3558476}}

@ARTICLE{10529957,
	author={Chen, Hui and Keskin, Musa Furkan and Sakhnini, Adham and Decarli, Nicolò and Pollin, Sofie and Dardari, Davide and Wymeersch, Henk},
	journal={IEEE Wireless Commun.}, 
	title={{6G} localization and sensing in the near field: features, opportunities, and challenges}, 
	year={2024},
	volume={31},
	number={4},
	pages={260-267},
	doi={10.1109/MWC.011.2300359}}

@ARTICLE{10144718,
	author={Pesavento, Marius and Trinh-Hoang, Minh and Viberg, Mats},
	journal={IEEE Signal Process. Mag.}, 
	title={Three more decades in array signal processing research: an optimization and structure exploitation perspective}, 
	year={2023},
	volume={40},
	number={4},
	pages={92-106},
	doi={10.1109/MSP.2023.3255558}}

@ARTICLE{665,
	author={Van Veen, B. D.  and Buckley, K. M. },
	journal={IEEE ASSP Mag.}, 
	title={Beamforming: a versatile approach to spatial filtering}, 
	year={1988},
	volume={5},
	number={2},
	pages={4-24},
	doi={10.1109/53.665}}

@ARTICLE{8809740,
	author={Wang, Peng and Kong, Yujun and He, Xuefang and Zhang, Mingxing and Tan, Xiuhui},
	journal={IEEE Access}, 
	title={An improved squirrel search algorithm for maximum likelihood {DOA} estimation and application for {MEMS} vector hydrophone array}, 
	year={2019},
	volume={7},
	number={},
	pages={118343-118358},
	doi={10.1109/ACCESS.2019.2936823}}

@INPROCEEDINGS{8212844,
	author={Vikas, B. and Vakula, D.},
	booktitle={Proc. Int. Conf. Electron., Commun. Aerosp. Technol. (ICECA)}, 
	title={Performance comparision of {MUSIC} and {ESPRIT} algorithms in presence of coherent signals for {DOA} estimation}, 
	year={2017},
	volume={2},
	number={},
	pages={403-405},
	doi={10.1109/ICECA.2017.8212844}}

@ARTICLE{5456168,
	author={Pal, Piya and Vaidyanathan, P. P.},
	journal={IEEE Trans. Signal Process.}, 
	title={Nested arrays: a novel approach to array processing with enhanced degrees of freedom}, 
	year={2010},
	volume={58},
	number={8},
	pages={4167-4181},
	doi={10.1109/TSP.2010.2049264}}

@ARTICLE{9773022,
	author={Shaalan, Ahmed M. A. and Du, Jun and Tu, Yan-Hui},
	journal={IEEE Trans. Signal Process.}, 
	title={Dilated nested arrays with more degrees of freedom ({DOFs}) and less mutual coupling—{Part} {I}: the fundamental geometry}, 
	year={2022},
	volume={70},
	number={},
	pages={2518-2531},
	doi={10.1109/TSP.2022.3174451}}
\end{document}